\newtheorem{theorem}{Theorem}
\newtheorem{proposition}[theorem]{Proposition}
\newtheorem{coro}[theorem]{Corollary}
\newtheorem{assumption}[theorem]{Assumption}
\newcommand{\bra}[1]{\langle #1 |}
\newcommand{\ket}[1]{| #1 \rangle}
\newcommand{\comm}[2]{\left[#1,#2\right]}
\newcommand{\CC}{\mathbb{C}}
\newcommand{\RR}{\mathbb{R}}
\newcommand{\su}{\mathfrak{su}}
\newcommand{\der}[3]{\frac{{\rm d}^{#1}#3}{{\rm d}{#2}^{#1}}}
\newcommand{\partder}[3]{\frac{\partial^{#1}#3}{\partial{#2}^{#1}}}
\newcommand{\abs}[1]{| #1 |}
\newcommand{\Sx}{\hat{S}_x}
\newcommand{\Sy}{\hat{S}_y}
\newcommand{\Sz}{\hat{S}_z}
\newcommand{\U}{\hat{U}}
\newcommand{\Ham}{\hat{H}}
\newcommand{ \T}{\hat{T}}
\newcommand{\tr}{\mathrm{tr}\,}
\newcommand{\bi}{\begin{itemize}\setlength\itemsep{0pt}}
\newcommand{\ei}{\end{itemize}}
\newcommand{\be}{\begin{enumerate}\setlength\itemsep{0pt}}
\newcommand{\ee}{\end{enumerate}}
\newcommand{\non}{\nonumber}
\newcommand{\bea}{\begin{eqnarray}}
\newcommand{\eea}{\end{eqnarray}}
\newcommand{\citearxiv}[1]{ \cite{#1}}
\newcommand{\citearxiveg}[1]{ \cite[e.g.]{#1}}
\newcommand{\arxiv}[1]{#1}
\newcommand{\nonarxiv}[1]{}
\begin{document}

\title{Quantum Optimal Control via Magnus Expansion and Non-Commutative Polynomial Optimization}
\author{Jakub Marecek}
\affiliation{IBM~Research}
\author{Jiri Vala}
\affiliation{IBM~Research}
\affiliation{Department of Theoretical Physics, Maynooth University, Ireland}
\affiliation{School of Theoretical Physics, Dublin Institute for Advanced Studies,  Ireland}
\date{\today}

\begin{abstract}
    Quantum optimal control has numerous important applications
    ranging from pulses in magnetic resonance imagining to laser control of chemical reactions and quantum computing. 
Our objective is to address two major challenges that have limited the success of applications of quantum optimal control so far: non-commutativity inherent in quantum systems and non-convexity of quantum optimal control problems involving more than three quantum levels. 
 Methodologically, we address the non-commutativity of the control Hamiltonian at different times by the use of Magnus expansion. 
 To tackle the non-convexity, we employ non-commutative polynomial optimisation and non-commutative geometry.
 As a result, we present the first globally convergent methods for quantum optimal control.
\end{abstract}

\maketitle

\section{Introduction}

Quantum optimal control is behind many recent advances in science and technology.
In Biology, quantum optimal control allows nuclear magnetic resonance (NMR) spectroscopy to study large biomolecules in solution \cite[e.g.]{skinner2003application,khaneja2005optimal}. 
In Chemistry, quantum optimal control in laser spectroscopy brings fundamental insights into reaction dynamics; laser control directs chemical reactions to a desired target or even enables a design of new chemical species and materials  \cite[e.g.]{assion1998control,rabitz2000whither,rice2000optical}. 
In Neurology and Neurosciences, quantum optimal control provides radio-frequency pulses yielding higher resolution \cite{xu2008designing} in  functional magnetic resonance imaging (fMRI), and hence better diagnoses with less time spent in the scanner.
In Photonics and Metrology, 
interferometers \cite{o2009photonic,eckle2008attosecond} utilise quantum optimal control as a means of designing semi-classical probes, which 
have applications from gravity-wave detection in cosmology \cite{adhikari2014gravitational} to phase-contrast microscopy and attosecond tunnelling in nanotechnology \cite{eckle2008attosecond}. In other areas of Physics, real-time quantum control improves  preparation of non-classical states of light \cite{sayrin2011real}, for example.
In quantum computing \cite{Nielsen_00}, better quantum optimal control provides faster and more accurate two-qubit gates \cite{motzoi2009simple,o2009photonic}, and multi-level operations in general,  
enabling fault-tolerant quantum computation  \cite{campbell2017roads} eventually. More generally, one can replace the application of an entire quantum circuit with a control signal.
Despite the importance, quantum optimal control remains little understood.


There are many excellent results\citearxiveg{Brockett1972,Jurdjevic1981,el1996subsemigroups,Boscain2014,boscain2015approximate,Borzi2017},  that establish conditions for controllability, as surveyed in \cite{d2007introduction,jurdjevic2016optimal},
but constructive, algorithmic approaches still leave space for improvement.
Only the optimal control of two-level closed quantum systems (e.g., pulse shaping for one-qubit gates) and three-level closed quantum systems (e.g., pulse shaping for one-qutrit gates) is essentially solved\citearxiv{d2001optimal,boscain2002optimal,garon2013time,romano2014geometric,albertini2016time,7981370}, because the problem is invex \cite{pechen2014coherent}. 
The control of systems involving more than three levels\citearxiv{albertini2003notions}, including the pulse-shaping for two-qubit gates, is essentially an open problem. 
First, most formulations seem to assume commutativity of the Hamiltonian at different times.
Second, the corresponding quantum optimal control on an $N$-level system is non-convex for $N \ge 4$ \cite{pechen2011there,Zhdanov2018,bondar2019uncomputability}, but only heuristics based on first-order optimality conditions are employed. 
There can hence be arbitrarily bad local optima \cite{pechen2011there,wu2011role,Zhdanov2018}, which the heuristics cannot escape. 

In this paper, we address both issues of non-commutativity and non-convexity of the problem by employing Magnus expansion \cite{Magnus_54,Blanes_09} and tools from non-commutative polynomial optimisation \cite{Pironio2010,burgdorf2016optimization}.
In addressing the non-commutativity side of the problem, our work improves most directly upon the work of Schutjens et al. \cite{Egger2013} and Theis et al. \cite{Theis2016}, who consider the lowest-order term of the Magnus expansion, also known as the average Hamiltonian, and derive conditions for all other terms being zero. In contrast to their approach (known as Weak Anharmonicity with Average Hamiltonian), we consider an arbitrary number of terms in the Magnus expansion. 
Our work complements research on Magnus expansion in numerical integration of the Schr{\"o}dinger equation  \cite[e.g.]{Blanes_09,singh2018high,kopylov2019magnus}, which so far has not been developed in the context of quantum control. 

In addressing the non-convexity of the problem, we utilise a hierarchy of progressively stronger convexifications. This improves upon all related work on quantum optimal control,
which guarantees only monotonic convergence to first-order critical points or local minima\citearxiv{peirce1988optimal,krotov1995global,reich2012monotonically} of a non-convex optimisation problem based on Pontryagin's maximum principle. 
This related work can be grouped into several clusters:
(i) 
derivative-free optimisation methods\citearxiveg{doria2011optimal,caneva2011chopped,Theis2016,bukov2018reinforcement}, including the so called chopped random basis \cite{doria2011optimal}\citearxiv{caneva2011chopped} method (CRAB),
(ii) gradient methods\citearxiv{peirce1988optimal}, including the so-called
 gradient ascent pulse engineering (GRAPE,  \cite{khaneja2005optimal,motzoi2009simple}\citearxiv{gambetta2011analytic,Egger2013}) and 
Krotov method \cite{tannor1992control}\citearxiv{krotov1995global,reich2012monotonically} 
as two prominent examples.
Faster convergence to local optima may be obtained using (iii) quasi-Newton\citearxiveg{de2011second,eitan2011optimal,machnes2011comparing,vinding2012fast},
and Newton-like methods such as \cite{ciaramella2015newton}.
Even the work on (vi) sequential convexifications \cite{kosut2013robust} is essentially heuristic, albeit addressing the challenge of non-convexity explicitly,
and being the closest to our method, in spirit.
We refer to \arxiv{\cite{mabuchi2005principles,d2007introduction,altafini2012modeling,jurdjevic2016optimal} for extensive surveys within control theory and to\citearxiv{peirce1988optimal,werschnik2007quantum,brif2010control,machnes2011comparing,glaser2015training} for tutorials and surveys aimed at the physics community. \cite{brockett1973lie,dirr2008lie} introduce the mathematical foundations well.}\nonarxiv{\cite{d2007introduction,jurdjevic2016optimal,glaser2015training} for surveys.}
In summary, 
our approach is the first to offer guarantees of asymptotic convergence to the global optimum of the quantum optimal control problem.


\section{The Problem}

Let us consider a finite $N$-dimensional quantum system whose time-evolution is governed by a Schr\"odinger equation. 
Given an initial condition $\U(0) = \hat{I}$,
where $\hat{I}$ is a unit matrix in $\CC^{N \times N}$,
a terminal time $T > 0$,
and a target unitary $\U^* \in U(N) \subset \CC^{N \times N}$, 
where 
$U(N)$ is the Lie group of $N \times N$ unitary operators or matrices,
we aim to control a time-dependent Hamiltonian $\Ham (t)$ over time $t \in [0, T]$.
That is, we seek a particular solution to the initial value problem for the Schr\"odinger equation\footnote{There are many formulations possible. In contrast to those involving unbounded operators, such as position and momentum, we consider only a finite dimension $N$; this is perfectly sufficient for many applications, including the control of a system of a finite number of quantum bits in quantum-computing applications.} 
\begin{align}
\label{schroedinger}
\partder{}{t}{} \U (t) = \hat{A}(t) \U (t)
\end{align}
where $\hat{A}(t) = \hat{H}(t)/i\hbar$ can explicitly be written in terms of controls $u_j(t): [0, T] \to \RR$ as
\begin{align}
\label{eq:hata}
    \hat{A}(t) = \sum_j u_j(t) ~\Ham_j/i\hbar. 
\end{align}

\begin{figure*}[htb]
\begin{tikzpicture}
\def\angle{90}%
\pgfmathsetlengthmacro{\xoff}{2cm*cos(\angle)}%
\pgfmathsetlengthmacro{\yoff}{0.55cm*sin(\angle)}%
\draw [thick, fill=gray!10] (\xoff,\yoff) circle[x radius=2.5cm, y radius=2cm] ++(2*\xoff,2*\yoff) node{$\textrm{ME}(R(\U(0)), 3)$};
\draw [thick, fill=gray!20] (0.5*\xoff,0.5*\yoff) circle[x radius=1.7cm, y radius=1.1cm] ++(1*\xoff,1*\yoff) node{$\textrm{ME}(R(\U(0)), 2)$};
\draw [thick, fill=gray!30] (0,0) circle[x radius=0.7cm, y radius=0.5cm] node{$R(\U(0))$};
\node [rectangle, right=1cm, text width=10cm] (eq1) at (2, 0) {
 \begin{minipage}{\textwidth}
\begin{align}
\Omega_1(T) &= \int_0^T dt_1~\hat{A}_1, \non \\
\Omega_2(T) &= \frac{1}{2} \int_0^T  dt_1  \int_0^{t_1}  dt_2 ~\comm{\hat{A}_1}{\hat{A}_2}, \non \\
\Omega_3(T) &= \frac{1}{6} \int_0^T  dt_1  \int_0^{t_1}  dt_2  \int_0^{t_2}  dt_3 \left(\comm{\hat{A}_1}{\comm{\hat{A}_2}{\hat{A_3}}} + \comm{\comm{\hat{A}_1}{\hat{A_2}}}{\hat{A}_3} \right), \non \\
\Omega_4(T) &= \frac{1}{12} \int_0^T  dt_1  \int_0^{t_1}  dt_2  \int_0^{t_2}  dt_3 \int_0^{t_3}  dt_4 \left( \comm{\comm{\comm{\hat{A}_1}{\hat{A}_2}}{\hat{A}_3}}{\hat{A}_4} \right. \non \\
& + \left. \comm{\hat{A}_1}{\comm{\comm{\hat{A}_2}{\hat{A}_3}}{\hat{A}_4}} + \comm{\hat{A}_1}{\comm{\hat{A}_2}{\comm{\hat{A}_3}{\hat{A}_4}}} + \comm{\hat{A}_2}{\comm{\hat{A}_3}{\comm{\hat{A}_4}{\hat{A}_1}}} \right), \non \\
\dots \non
\end{align}
\end{minipage}
};
\end{tikzpicture}
\caption{First terms of the Magnus expansion\arxiv{ (\ref{MagnusExpansion})}, utilising 
$\hat{A}_m = \hat{A}(t_m) = \hat{H}(t_m)/i\hbar$,
and a schematic illustration of (a two-dimensional projection of) the corresponding reachable sets (in possibly increasing dimensions). Notice that reachable set $\textrm{ME}(R(\U(0)), m)$ corresponds to $\Omega_m(T)$ and  $\textrm{ME}(R(\U(0)), m)$ is a subset of $\textrm{ME}(R(\U(0)), m+1)$.}
\label{fig:first}
\end{figure*}

In particular, we seek a solution that is optimal with respect to a given  functional $J$, while using controls $\{u_j(t)\}$ constrained to some set $\Upsilon$.
Formally, the quantum optimal control problem reads:
\begin{align}
    \displaystyle\min_{\U(t), \{u_j(t)\} \in \Upsilon} & J \left(\U (t), \{u_j(t)\} \right) \label{eq:QOC} \\
    \mbox{s.t.} ~~&\partder{}{t}{} \U (t) =  \left[\sum_j u_j(t) ~\Ham_j/i\hbar\right] \U (t), \non \\
    &\U(0) = \hat{I}.
    \non
\end{align}
where $J$ is the (objective) functional for the control problem, 
which is polynomially or semidefinite representable \cite{helton2007linear}, 
and $\Upsilon$ is a polynomially representable set.
Many examples of $J$, such as trace distance of $\U(T)$ from a target unitary $\U^*$, are discussed in the Supplementary Material\arxiv{ (p. \pageref{sec:functionals})}.

\arxiv{
\section{Notation and Definitions}

When the terminal time $T$, also known as horizon, is not fixed, the existence of $\{u_j(t)\}$ for any target $\U^*$  is variously known as complete controllability, operator controllability, or exact controllability, cf. \cite{albertini2003notions}.
In particular, a time-dependent Hamiltonian $\Ham (t)$ (system) is operator controllable if and only if 
for any $\U^* \in U(N)$ 
there exists a terminal time $T$ and an admissible control to drive the state $\U(t)$ in \eqref{schroedinger} from $\U(0) = \hat{I}$ to $\U(T) = \U^*$.

If a system is not operator controllable, we are concerned with the reachable set $R(\U(0))$ from a given initial state $\U(0)$.
In particular, $\U^* \in U(N)$ is an element of the reachable set $R(\U(0)) \subseteq U(N) \subset \CC^{N \times N}$ if and only if there exists a horizon $T$ and an admissible control to drive the state from $\U(0)$ to $\U(T) = \U^*$ in \eqref{schroedinger}.
With this definition, a system is operator controllable if and only if $R(\hat{I}) = U(N)$.


In the following we will focus on the special unitary group $SU(N)$ of $N \times N$ complex matrices with the unit determinant. The group $U(N)$ factorizes into a semidirect product of $SU(N)$ and the cyclic group $U(1)$ of complex numbers $z = e^{i \varphi}$ with the modulus $\abs{z} = 1$ which however have no observable consequences in quantum mechanics and can be ignored. 

The group $SU(N)$ is generated by elements of the Lie algebra $\su(N)$ in the sense that each element of the group is a complex exponential function of a Hermitian traceless matrix from the algebra (or a real exponential function of an anti-Hermitian traceless matrix as it is  more common in mathematics literature). An $\su(N)$ algebra element can be written as a real linear combination of suitably chosen $N^2 - 1$ basis elements. 

A Hamiltonian operator generates the unitary evolution operator in the same way as an element of a Lie algebra generates the corresponding element of a Lie group. To make the connection with the algebra $\su(N)$, we shift the Hamiltonian by a constant term to eliminate the $U(1)$ factor in the dynamics it generates. In the case $N = 2^n$, the Hamiltonian can be written in terms of the Pauli matrices and their tensor products.

The condition for the operator controllability in $SU(N)$ via quantum dynamics is related to what extent the time-dependent Hamiltonian $\hat{H}(t)$ explores the Lie algebra $\su(N)$. Notably, Borzi et al. \cite{Borzi2017}(Theorem 4.7) 
show that a necessary and sufficient condition for operator controllability in $SU(N)$ of the Equation \eqref{schroedinger} is that the Lie algebra generated by the Hamiltonian $\Ham (t)$  has dimension $N^2 - 1$.
This is an important existential condition but it provides no insight into a mechanism by which the corresponding Lie algebra is generated from the time-dependent Hamiltonian in the course of time evolution. For this, we have to go back to the initial value problem \eqref{schroedinger}. 
}

\section{Our Approach}

It is well known that the initial value problem \eqref{schroedinger} has a solution in the form of the Magnus expansion \cite{Magnus_54,Blanes_09}:

\nonarxiv{
\begin{align}
\label{MagnusExpansion}
\Omega(T) = \sum_{m = 1}^\infty \Omega_m (T), 
\end{align}
where the individual terms in the series require evaluations of increasingly more complex integrals involving nested commutators, as illustrated in Figure~\ref{fig:first}.
When the series $\Omega_m(T)$ is absolutely convergent, then $\U(t)$ can be written in the form
\begin{align}
\label{eq:solution}
    \U(t) = \exp \Omega(t). 
\end{align}
}

\arxiv{
\begin{theorem}[Magnus \cite{Magnus_54,Blanes_09}]
\label{Magnus_theorem}
Let $\hat{A}(t)$ be a known function of time $t$, and let $\U(t)$ be an unknown function satisfying \eqref{schroedinger} with $\U(0) = \hat{I}$. Let 
\begin{align}{\label{MagnusOmega}}
    \der{}{t}{\Omega} = \sum_{n = 0}^{\infty} \frac{B_n}{n!} ~\operatorname{ad}_\Omega^n~\hat{A} 
\end{align}
where $B_n$ are the Bernoulli numbers and $\operatorname{ad}_\Omega^n\hat{A}$ is a linear operator constructed recursively via $\operatorname{ad}_{\hat{A}}^{j}\hat{B} = \comm{\hat{A}}{\operatorname{ad}_{\hat{A}}^{j-1}\hat{B}}$ where $\operatorname{ad}_{\hat{A}}\hat{B} = \comm{\hat{A}}{\hat{B}}$, and $\operatorname{ad}_{\hat{A}}^{0}\hat{B} = \hat{B}$.
Integration of (\ref{MagnusOmega}), by iteration, leads to an infinite series:
\begin{align}
\label{MagnusExpansion}
\Omega(T) = \sum_{m = 1}^\infty \Omega_m (T), 
\end{align}
whose first terms are given in Figure~\ref{fig:first}.
When the series $\Omega_m(T)$ is absolutely convergent, then $\U(t)$ can be written in the form
\begin{align}
\label{eq:solution}
    \U(t) = \exp \Omega(t). 
\end{align}
\end{theorem}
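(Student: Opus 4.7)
The plan is to posit the ansatz $\U(t) = \exp\Omega(t)$ with $\Omega(0) = 0$, and derive the ODE that $\Omega$ must satisfy in order for $e^{\Omega(t)}$ to solve \eqref{schroedinger}. The central tool is the classical formula for the derivative of a matrix exponential, which converts the nonlinear operator equation \eqref{schroedinger} into an equation for $\dot\Omega$ alone.

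First I would establish the identity
\[
\frac{d}{dt}e^{\Omega(t)} = \left(\sum_{k=0}^{\infty}\frac{1}{(k+1)!}\operatorname{ad}_\Omega^k \dot\Omega\right) e^{\Omega(t)}
\]
starting from the integral representation $\frac{d}{dt}e^{\Omega} = \int_0^1 e^{s\Omega}\, \dot\Omega\, e^{(1-s)\Omega}\, ds$, applying the standard conjugation rule $e^{s\Omega} X e^{-s\Omega} = e^{s\operatorname{ad}_\Omega}X$, and noting that $\int_0^1 e^{s\operatorname{ad}_\Omega}\,ds$ collapses termwise to the series above. Demanding $\frac{d}{dt}e^\Omega = \hat{A}\, e^\Omega$ then forces the algebraic condition $\varphi(\operatorname{ad}_\Omega)\,\dot\Omega = \hat{A}$, where $\varphi(z) = (e^z-1)/z$. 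Wherever $\|\operatorname{ad}_\Omega\| < 2\pi$, $\varphi$ is invertible as an analytic function of the bounded operator $\operatorname{ad}_\Omega$, so I can solve for $\dot\Omega$ by applying the classical generating identity $z/(e^z-1) = \sum_{n \ge 0} B_n\, z^n/n!$ to the operator $\operatorname{ad}_\Omega$. This reproduces exactly \eqref{MagnusOmega}.

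Second, with $\Omega(0) = 0$, the ODE \eqref{MagnusOmega} is amenable to Picard iteration. Substituting the ansatz $\Omega(T) = \sum_{m \ge 1}\Omega_m(T)$, graded by the number of copies of $\hat{A}$ appearing, and reading off contributions order by order in $\hat{A}$, reproduces the nested-commutator integrals listed in Figure~\ref{fig:first}; in particular $\Omega_1$ is just the time integral of $\hat{A}$ (coming from the $n=0$ term with $B_0=1$), while $\Omega_2$ arises from pairing the $n=1$ term (with $B_1 = -\tfrac{1}{2}$) against the leading piece of $\Omega$ inside $\operatorname{ad}_\Omega$, and so on. Once $\Omega$ solves \eqref{MagnusOmega}, the first step of the argument gives $\frac{d}{dt}e^\Omega = \hat{A}\,e^\Omega$, and by uniqueness of solutions to the linear initial value problem \eqref{schroedinger}, $e^{\Omega(t)} = \U(t)$, which is \eqref{eq:solution}.

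The main obstacle is convergence, and this is precisely what the hypothesis of absolute convergence of $\sum_m \Omega_m(T)$ in the statement is there to absorb. Because the Bernoulli generating series $\sum_n B_n z^n / n!$ has radius of convergence only $2\pi$, the inversion $\varphi(\operatorname{ad}_\Omega)^{-1}$ in the first step is justified only while $\|\Omega(t)\|$ stays strictly below $\pi$. The hard part is therefore propagating a quantitative norm bound through the Picard iteration --- for instance by comparing $\|\Omega(t)\|$ against a scalar majorant ODE derived from $\|\comm{X}{Y}\| \le 2\|X\|\|Y\|$ --- so as to certify a non-trivial interval on which $\sum_m \Omega_m$ converges absolutely and on which one may freely interchange summation, differentiation, and $\operatorname{ad}_\Omega$ throughout the derivation.
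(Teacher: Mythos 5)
Your derivation is correct and is precisely the classical argument behind this theorem: the paper itself offers no proof, stating the result as a citation to Magnus and to Blanes et al., and the proof in those references proceeds exactly as you do --- the derivative-of-the-exponential formula giving $\varphi(\operatorname{ad}_\Omega)\dot\Omega = \hat{A}$ with $\varphi(z)=(e^z-1)/z$, inversion via the Bernoulli generating series to obtain \eqref{MagnusOmega}, and Picard iteration graded by powers of $\hat{A}$ to recover the terms in Figure~\ref{fig:first}. Your closing remark on the $\|\Omega\|<\pi$ obstruction is also consistent with how the paper handles convergence, namely by deferring it to the Moan--Niesen criterion of Proposition~\ref{convergentMagnus} rather than proving it within the theorem.
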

}

In a key insight of this paper, we show that the nested commutators between the Hamiltonian at different times are instrumental in extending the reachable set. This expansion is accomplished via two distinct mechanisms. First, the nested commutators generate new linearly independent elements of the Lie algebra $\su(N)$ and hence increase the dimension of the reachable set. The operator controllability is accomplished when this dimension reaches $N^2 - 1$. 

The second mechanism is related to the controls $\{u_j(t)\}$, which figure in coefficients of the Lie algebra elements that are generated by the $k$ terms of the Magnus expansion. The controls are, in general, arbitrary functions of time which allows for an arbitrary linear combination of the Lie algebra elements. Hence the two mechanisms together result in expanding the dimension of the reachable set as a Lie algebra and also in densely covering this set by control functions and their integrals.
We denote the expanded reachable set obtained with the lowest $m$ terms in the Magnus series by $\textrm{ME}(R(\U(0)), m)$. 

Specifically, 
in the case of a time-dependent Hamiltonian, one need not only consider  generators of the Lie group as they appear in $\Ham (t)$, but one can also consider the commuting relations obtained by Magnus expansion:

\begin{proposition}
(i) A necessary condition for reachability of any $\U(T) \in SU(N)$ from  $\U(0) = \hat{I}$ to $\U(T)$, considering Magnus expansion 
\arxiv{(\ref{MagnusExpansion})}
is that the dimension of the Lie algebra generated by Magnus expansion \arxiv{(\ref{MagnusExpansion})} has dimension $N^2 - 1$.
(ii) A necessary and sufficient condition for the existence of time $T^*$ such that for all $T > T^*$ one has exact-time operator controllability 
is that the dimension of the Lie algebra generated by Magnus expansion \arxiv{(\ref{MagnusExpansion})} has dimension $N^2 - 1$.
\label{LieAlgebraDimensionWithMagnus}
\end{proposition}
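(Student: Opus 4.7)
The plan is to reduce the statement to a classical Lie-algebraic controllability criterion by identifying the ``Lie algebra generated by the Magnus expansion'' with the dynamical Lie algebra $\mathcal{L} := \mathrm{Lie}(\Ham_1/i\hbar, \ldots, \Ham_k/i\hbar)$, and then to invoke the Borzi et al.\ theorem referenced earlier in the paper.

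First I would unpack the structure of each Magnus term $\Omega_m(T)$. Substituting \eqref{eq:hata} into Figure~\ref{fig:first}, every $\Omega_m(T)$ expands as a finite linear combination, indexed by multi-indices $(j_1,\ldots,j_m)$, of iterated integrals of products $u_{j_1}(t_1)\cdots u_{j_m}(t_m)$ multiplied by $m$-fold nested commutators of the $\Ham_{j_r}/i\hbar$. This immediately shows the inclusion $\mathrm{span}\{\Omega_m(T)\} \subseteq \mathcal{L}$. For the reverse inclusion, I would exhibit, for each bracket monomial in $\mathcal{L}$, an admissible control (e.g.\ piecewise-constant pulses activating a single $u_j$ on disjoint subintervals, or smooth analogues) whose associated iterated integrals form a triangular, invertible coefficient matrix against the bracket monomials of order $\le m$. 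This is essentially an independence argument for iterated integrals modulo shuffle relations; I would carry it out by induction on the bracket depth. The upshot is that the Lie algebra generated by $\bigcup_{m,T,\{u_j\}}\{\Omega_m(T)\}$ equals $\mathcal{L}$.

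Once this identification is in hand, part (i) follows from Theorem~\ref{Magnus_theorem}: on any interval where the Magnus series converges, $\U(T)=\exp\Omega(T)$ with $\Omega(T)\in\mathcal{L}$, and by concatenating such intervals via the group property of the propagator every reachable state lies in the connected Lie subgroup $G_{\mathcal{L}}\subseteq SU(N)$ with Lie algebra $\mathcal{L}$. Since $SU(N)$ is connected and $G_{\mathcal{L}}$ is a proper submanifold whenever $\dim \mathcal{L} < N^2-1$, operator controllability forces $\dim \mathcal{L}=N^2-1$, which is precisely the dimension condition on the Magnus-generated Lie algebra. Part (ii) is then a packaging result: sufficiency is the classical Jurdjevic--Sussmann bracket-generating criterion on the compact Lie group $SU(N)$, as used by Borzi et al.\ (Theorem 4.7) and quoted in the paper, together with the standard argument that on a compact connected Lie group a bracket-generating right-invariant system has a uniform controllability time $T^*$ beyond which the reachable set at exact time $T$ is the whole group; necessity is part (i).

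The hard part will be Step 2, the reverse inclusion showing that the Magnus-generated linear span is not only contained in but equals $\mathcal{L}$. The coefficients of the iterated integrals in $\Omega_m(T)$ are not independent (they obey Chen's shuffle identities), so one must choose a control family rich enough to realize each bracket monomial while avoiding cancellations. I would handle this by selecting a basis of Hall-type brackets for $\mathcal{L}$ and constructing, inductively on the bracket length, controls whose moments produce a unit vector along each basis element modulo lower-order terms. A secondary subtlety is the finite radius of convergence of the Magnus series, but this is standard: one partitions $[0,T]$ into subintervals short enough for local convergence and composes the resulting exponentials, using that $G_{\mathcal{L}}$ is closed under products.
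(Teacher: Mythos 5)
Your proposal is correct, but it takes a more classical and more self-contained route than the paper. The paper's own proof of part (i) is essentially a structural remark: it observes that the nested commutators of the time-dependent Hamiltonian produce the relevant subalgebras of $\su(N)$, including a Cartan subalgebra (illustrated for $SU(4)$ and citing a general Cartan-decomposition scheme), notes that the condition is necessary but yields no \emph{a priori} horizon, and disposes of part (ii) entirely by citing Corollary 4.11 of Borzi et al. You instead identify the Magnus-generated Lie algebra with the dynamical Lie algebra $\mathcal{L}=\mathrm{Lie}(\Ham_1/i\hbar,\ldots,\Ham_k/i\hbar)$ and then run the standard Lie-algebra rank-condition argument: containment of the reachable set in the connected subgroup $G_{\mathcal{L}}$ (after partitioning time to stay inside the Magnus convergence radius and composing propagators) gives necessity, and the Jurdjevic--Sussmann/compact-group argument gives the uniform exact time $T^*$ for (ii). Your version buys a genuinely complete argument where the paper offers a sketch plus citations; the paper's version buys the physical picture of which generators appear at which Magnus order, which it exploits later in the two-qubit example. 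One remark: the step you flag as hard is harder than it needs to be. Since the proposition speaks of the Lie algebra \emph{generated} by the Magnus expansion, the reverse inclusion already follows from the first-order term $\Omega_1(T)=\sum_j\bigl(\int_0^T u_j\bigr)\Ham_j/i\hbar$: activating one control channel at a time recovers each $\Ham_j/i\hbar$ up to scale, and these generate all of $\mathcal{L}$ under bracketing. The Hall-basis and iterated-integral independence machinery is only needed for the stronger claim that the linear span of the Magnus terms themselves exhausts $\mathcal{L}$, which the proposition as stated does not require.
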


\arxiv{
\begin{proof}
(i) The key point here is that, via the nested commutation relations, the time-dependent Hamiltonian produces all subalgebras of $\su(N)$ including its Cartan subalgebra. For example, in the case of $SU(4)$ this requires three generators of the Cartan subalgebra given for instance by $T_{\alpha \alpha} = \sigma_\alpha\otimes\sigma_\alpha/2$, where $\sigma_\alpha$ are the Pauli matrices with $\alpha = x,~y,~z$, and at least two generators for each of its two subalgebras  $\su(2)$.
A scheme to perform the Cartan decomposition of the Lie algebra $\su(N)$ for an arbitrary $N$ was introduced in \cite{Su_06}. 
Notice that the condition is necessary but not sufficient in the sense that we do not know the horizon required \emph{a priori}.
(ii) follows from Corollary 4.11 of \cite{Borzi2017}.
\end{proof}
}

The proof is in the Supplementary Material.
It must be pointed out that the validity of our approach is limited by the convergence of the Magnus expansion. This has been  studied extensively\citearxiv{Casas2007,moan2008convergence,lakos2017}, with the most recent result provided by Moan and Niesen \cite{moan2008convergence} in the following proposition:

\begin{proposition}[Moan and Niesen \cite{moan2008convergence}]
\label{convergentMagnus}
The Magnus expansion \arxiv{\eqref{MagnusExpansion}} is absolutely convergent for $t \in  [0,T)$, if
\begin{align}
\label{eq:MagnusConvergenceCriterion}
\int _{0}^{T}\|\hat A(s)\|_{2}\,ds<\pi 
\end{align}
where $\|\cdot \|_{2}$ denotes a matrix norm and 
one assumes the initial condition $\U(0) = \hat{I}$ where $\hat{I}$ is a unit matrix in $\CC^{N \times N}$. 
This result is generic, in the sense that one may construct specific matrices $\hat{A}(t)$ for which the series diverges for any $t > T$.
\end{proposition}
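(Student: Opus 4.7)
The plan is to derive and analyze a scalar comparison ODE bounding $\|\Omega(t)\|_2$. Differentiating $\U(t) = e^{\Omega(t)}$ and combining with $\partial_t \U = \hat{A}\U$ yields, via the well-known derivative of the matrix exponential (Baker's formula), the Magnus differential equation $\partial_t \Omega = g(\operatorname{ad}_\Omega)\hat{A}$, where $g(z) = z/(1-e^{-z}) = \sum_{n\ge 0} (-1)^n B_n z^n/n!$ is the generating function of the Bernoulli numbers. Iterating this ODE reproduces the series $\sum_m \Omega_m(T)$ stated in Theorem~\ref{Magnus_theorem}, so absolute convergence of the Magnus series on $[0,T)$ is equivalent to boundedness of the solution $\Omega(t)$ there.

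Next I would reparametrize by $r(t) = \int_0^t \|\hat{A}(s)\|_2\,ds$, so that $\frac{d\Omega}{dr}$ satisfies $\|d\Omega/dr\|_2 \le \bar{g}\bigl(\|\operatorname{ad}_\Omega\|_2\bigr)$, where $\bar{g}(x) := \sum_{n\ge 0} |B_n| x^n / n!$ is the series obtained by replacing each coefficient of $g$ by its absolute value. Using the elementary submultiplicative bound $\|\operatorname{ad}_\Omega X\|_2 \le 2\|\Omega\|_2\|X\|_2$ one obtains the scalar comparison inequality $\phi'(r) \le \bar{g}(2\phi(r))$, with $\phi(0)=0$ majorizing $\|\Omega\|_2$. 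The critical observation is that $g(z)$ is meromorphic with its nearest singularities to the origin located at $z = \pm 2\pi i$, hence the radius of convergence of $\bar{g}$ is exactly $2\pi$; the scalar ODE $\phi'=\bar{g}(2\phi)$ therefore admits a solution that stays finite precisely while $2\phi(r) < 2\pi$, and a direct integration shows this is equivalent to $r < \pi$. Combining, $\|\Omega(t)\|_2$ stays bounded as long as $r(t)<\pi$, and the Magnus series converges absolutely on this interval.

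The hard part is pinning down the sharp constant $\pi$ rather than a weaker bound such as $\log 2$ or $\pi/2$ that cruder estimates (e.g., bounding term by term $\|\Omega_m\|$ by iterated nested commutator norms) would yield. This requires the careful identification of the radius of convergence of $g$ via its pole structure and the fact that the factor of $2$ in the $\operatorname{ad}$-bound precisely matches the location of the singularity at $2\pi i$, so that the two constants combine to give exactly $\pi$. A secondary subtlety is verifying that the formal comparison argument can be made rigorous by a Gr\"onwall-type iteration on the Picard approximants for $\Omega$, which requires continuity of $\hat{A}$ in $t$ in the integrable sense assumed above.

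Finally, to establish genericity of the bound, I would exhibit an explicit family of $2\times 2$ anti-Hermitian $\hat{A}(t)$ (e.g., piecewise constant along two non-commuting directions of $\su(2)$) for which $r(T) = \pi + \varepsilon$ and the resulting propagator $\U(T)$ has eigenvalues $-1$; at such points the principal logarithm is ill-defined and any formal power series representation of $\log \U(T)$ in the generators must diverge, giving an example on which the Magnus expansion fails to converge for every $T$ with $r(T)>\pi$. This shows the threshold $\pi$ in \eqref{eq:MagnusConvergenceCriterion} cannot be improved in general.
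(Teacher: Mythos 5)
The paper does not actually prove this proposition; it is quoted verbatim from Moan and Niesen \cite{moan2008convergence}, so your attempt has to be judged against the known proof of that result. There is a genuine gap, and it sits exactly where you say the ``hard part'' is. The scalar comparison argument you set up --- bound $\|\operatorname{ad}_\Omega X\|_2 \le 2\|\Omega\|_2\|X\|_2$, majorise the Magnus ODE by $\phi'(r) = \bar g(2\phi)$ with $\bar g(x) = \sum_n |B_n| x^n/n!$ --- is the classical argument, and when you actually integrate it, it does \emph{not} give $\pi$. Using $\bar g(x) = 2 + \tfrac{x}{2}\bigl(1-\cot(x/2)\bigr)$, separation of variables shows the comparison solution reaches the singularity at $2\phi = 2\pi$ at
\begin{align}
r_c \;=\; \int_0^{2\pi} \frac{dx}{4 + x\left(1-\cot(x/2)\right)} \;\approx\; 1.0869,
\notag
\end{align}
not at $r=\pi\approx 3.1416$. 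Your claim that ``a direct integration shows this is equivalent to $r<\pi$'' is false: the radius of convergence of $\bar g$ being $2\pi$ and the factor $2$ in the $\operatorname{ad}$ bound do not cancel to give $\pi$, because $\bar g(2\phi)\ge 2$ forces $\phi$ to hit $\pi$ long before $r$ does. This $\approx 1.0869$ threshold is precisely the pre-2008 state of the art that Moan and Niesen improved upon; no refinement of the term-by-term or comparison-ODE estimates is known to close the gap to $\pi$.

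The actual proof of the sharp constant proceeds by an entirely different route: one studies the exact propagator $\U_\epsilon(t)$ for the scaled problem $\partial_t \U_\epsilon = \epsilon \hat A \U_\epsilon$, shows that the imaginary parts of the logarithms of its eigenvalues are bounded by $|\epsilon|\int_0^t\|\hat A(s)\|_2\,ds$, so that no eigenvalue reaches the branch cut of the principal logarithm while this quantity stays below $\pi$, and then argues that $\epsilon \mapsto \log \U_\epsilon(t)$ is analytic on a neighbourhood of the closed unit disk; the Magnus series is its Taylor series at $\epsilon=0$ evaluated at $\epsilon=1$, whence absolute convergence. (Moan and Niesen also state this for \emph{real} matrices in the spectral norm; carrying it over to the anti-Hermitian $\hat A = \Ham/i\hbar$ of the paper is a separate, if benign, point.) Your sketch of the sharpness example --- a two-piece non-commuting $\su(2)$ control driving an eigenvalue of $\U(T)$ to $-1$ --- is in the right spirit and close to their actual counterexample, but on its own ``the principal logarithm is ill-defined'' does not yet show the \emph{series} diverges; that again requires locating a singularity of $\epsilon\mapsto\log\U_\epsilon(T)$ inside the unit disk.
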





Considering Proposition \ref{convergentMagnus}, we make:

\begin{assumption}[Existence of control] 
\label{Magnus} 
There exist controls 
$\{u_j(t)\} \in \Upsilon$ that attain the minimum of the optimal control problem \eqref{eq:QOC} within  terminal time $T$ satisfying constraint \eqref{eq:MagnusConvergenceCriterion}.
\end{assumption}

\section{The Constructive Argument}

Our second insight is that this approach can be made constructive,
considering that for any number $m$ of terms in the Magnus expansion, 
one obtains a non-commutative polynomial optimisation problem (NCPOP),
which can be solved by solving a sequence \cite[cf.]{Pironio2010} of natural linear
matrix inequalities \cite{boyd1994linear} in the original as well as additional variables for non-linear monomials.

In general, for Hermitian polynomials $p,q_i, i=1,\ldots,m$ over complex numbers, 
NCPOP is:
\begin{align}
\label{polyncprog2}
\min_{(H,X,\phi)}& \langle \phi, p(X) \phi\rangle\\
\text{s.t. }& q_i(X)\succeq
0&\quad i=1,\ldots,m\,,
\notag
\end{align}
where one optimises over  bounded operators
$X=(X_1,\ldots,X_n)
$.
Notice that the 
bounded operator
$p(X)$ acts by  substituting every variable $x_i$ by the operator $X_i$ and every variable $x_i^\dagger$ by $X_i^\dagger$, where $^\dagger$ denotes the adjoint on $\mathcal{H}$.  
If $p^\dagger=p$ is a Hermitian polynomial, then $p(X)=p^\dagger(X)$ is a Hermitian
operator. The notation
$q_i \succeq 0$ suggests operator $q_i$ is positive semi-definite, i.e.,  $\langle \phi, q_i \phi \rangle \geq 0$
for all $\phi\in H$.
The key insight is that a given noncommutative polynomial is positive semidefinite if and only if it decomposes as a sum of Hermitian squares \cite{mccullough2001factorization,helton2002positive}, which is known as the Helton-McCullough Sums of Squares theorem.
For more details, refer to \arxiv{p. \pageref{sec:nc-pop} in the }Supplementary Material.

Using first $\overline m$ terms of the Magnus expansion, the quantum optimal control \eqref{eq:QOC} can be reformulated as\arxiv{ $\mathbf{MEQOC}(\overline m)$}:
\begin{flalign}
\label{eq:QOCNC}
    & \min_{\Omega_m, \{u_j(t)\} \in \Upsilon} J \left(\sum_{m = 1}^{\overline m} \Omega_m (T), \{u_j(t)\} \right)  \\
    & \mbox{s.t. }  \forall m = 1 \ldots \overline m: \notag \\
    & \Omega_m (T) = \sum_i ~\hat{O}_i \notag \\ 
    &\times \int_0^{T} dt_1 \int_0^{t_1} dt_2 
    \dots \int_0^{t_m} dt_{m+1} 
    F_i(\{u_j\}(t_1,\dots,t_{m+1})) \notag 
\end{flalign}
where $\Upsilon$ and $J$ are the same arbitrary semialgebraic set
and a functional as in \eqref{eq:QOC}, respectively.
Notice that $\sum_i \hat{O}_i\tilde{F}_i(\{\tilde{u}_j\}(t_1,\dots,t_{m+1}))$ emerges from the nested commutators of the Magnus expansion which involves the Hamiltonian $\Ham(t) = \sum_j u_j(t) \Ham_j$ at different times.
The operators $\hat{O}_i$ result from the commutators between the Hamiltonian operators $\Ham_j$, and $F_i(\{u_j\}(t_1,\dots,t_{m+1}))$ is a polynomial function of the time-dependent controls $u_j(t)$ at different times, $t_1,\dots,t_{m+1}$, originating from the same commutator.
Notice also that we may need to discretise (subsample) time, as discussed in the Supplementary Material\arxiv{ (p.~\pageref{sec:disc1q})}.



Next, let us ask when there are globally convergent methods for NCPOP.
Considering we can eliminate $\Omega_m$ variables in (\ref{polyncprog2}) by substituting the right-hand side in the objective, 
this concerns largely the additional constraints defining $U$.
Let us define the quadratic module, by following Pironio et al. \cite{Pironio2010}.  Let $Q=\{q_i\}$ be the set of polynomials determining the constraints. 
The \emph{positivity domain} $\mathbf{S}_Q$ of $Q$ are tuples $X=(X_1,\ldots,X_n)$ of bounded operators
 on a Hilbert space $\mathcal{H}$ making all $q_i(X)$ positive semidefinite.
The \emph{quadratic module} $\mathbf{M}_Q$ is the set of  $\sum_if_i^*f_i+\sum_i\sum_j g_{ij}^*q_ig_{ij}$ 
where $f_i$ and $g_{ij}$ are polynomials from the same ring. 
We assume:

\begin{assumption}[Archimedean, cited from \cite{Pironio2010}, cf. also  \cite{burgdorf2016optimization}]
\label{Archimedean}
Quadratic module $\mathbf{M}_Q$ of \eqref{polyncprog2} is Archimedean, i.e., there exists a real constant $C$ such that $C^2-(x_1^*x_1+\cdots+x_{2n}^*x_{2n})\in \mathbf{M}_Q$. 
\end{assumption}

Notice, however, that up to the addition of a redundant constraint, this assumes compactness of the positivity domain, as explained in \cite{Pironio2010}.
The Archimedean assumption is hence very mild.


\begin{figure*}[t!]
\begin{tabular}{llll}
1. Optimal control & 
$\partder{}{t}{} \U (t) = \hat{A}(t) \U (t)$
& \quad &
\\
2. Magnus expansion & 
$\U(T) = \exp \sum_{m = 1}^\infty \Omega_m (T)$
& \quad & \multirow{2}{*}{$\Omega_m (T) = \sum_j ~\hat{O}_j\int_0^{T} dt_1 \int_0^{t_1} dt_2 \dots \int_0^{t_m} dt_{m+1} F(\{u_i\}(t_1,\dots, t_{m+1}))$}
\\
3. Truncated ME & 
$\U(T) \approx \exp \sum_{m = 1}^{\overline m} \Omega_m (T) $
& \quad &
\\
4. Discretised TME & 
$\U(K) \approx \exp \sum_{m = 1}^{\overline m} \tilde \Omega_m (K) $
& \quad &  
$\tilde{\Omega}_m(K) = \sum_j ~\hat{O}_j ~\Delta t^m \sum_{k_1 = 1}^{K} \sum_{k_2 = 1}^{k_1} \dots \sum_{k_{m} = 1}^{k_{m-1}}  ~\tilde{F}_j(\{\tilde{u}_i\}(k_1,\dots,k_m))$
\\
5. Semidefinite program & $\U(K)$ via Theorem \ref{T2} & \quad & 
\end{tabular}
\caption{The steps of our approach outlined. (The approximation in Step 3 is exact in the large limit of $m$. The approximation in Step 4 is exact in the large limit of $m$ and small limit of $\Delta t$.) On the right, we give an impression of the form of the terms of the Magnus expansion, as explained in the Supplementary Material\arxiv{ (p. \pageref{discretisation})}.}
\end{figure*}

\section{Main Result}

With these assumptions, we are ready to state our main result:

\begin{theorem}
\label{T2}
Under Assumptions 
\ref{Magnus} and 
\ref{Archimedean},  
for any initial state $\U(0)$, 
for any lower bound $\underline m$ on the number of terms in the Magnus expansion, 
for any target state in the expanded reachable set ME$(R(\U(0)), \underline m)$,
and any error $\epsilon > 0$, 
there is a number of terms $m(\epsilon) \ge \underline m$ 
such that $\epsilon$-optimal control 
with respect to any polynomially-representable 
functional 
can be extracted from the solution of a certain convex optimisation problem
in the  model of Blum, Shub, and Smale \cite{blum1989}.
\end{theorem}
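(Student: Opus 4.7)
My plan is to chain four approximation steps, each contributing a controlled share of the error budget $\epsilon$. First, I invoke Proposition~\ref{LieAlgebraDimensionWithMagnus} together with Assumption~\ref{Magnus} to ensure that the target lies in the image of the Magnus map with finitely many terms, and then use Proposition~\ref{convergentMagnus} to bound the tail $\sum_{m>\overline m}\Omega_m(T)$. Since $\{u_j\}\in\Upsilon$ is bounded and the convergence criterion \eqref{eq:MagnusConvergenceCriterion} holds, the operator norm of this tail can be made smaller than any $\epsilon_1>0$ by choosing $\overline m=\overline m(\epsilon_1)$ large enough; composing with the Lipschitz constant of $\exp$ on the relevant operator-norm ball and with that of the polynomially representable functional $J$ on a compact image set yields a truncation error in the objective of at most $\epsilon/3$.

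Next I discretize time into $K$ intervals of width $\Delta t=T/K$ as in the fourth row of the approach outline, replacing every iterated integral $\int_0^T\! dt_1\cdots\int_0^{t_m}\!dt_{m+1}$ by a Riemann sum and every $u_j(t)$ by its samples $\tilde u_j(k)$. Because each integrand $F_i$ is polynomial in the controls, standard quadrature bounds give a uniform $O(\Delta t)$ error, so that $K=K(\epsilon)$ can be fixed to contribute another $\epsilon/3$. This produces a finite-dimensional NCPOP of the form \eqref{polyncprog2}: the variables are the control samples $\tilde u_j(k)$ together with the bounded Hermitian operators $\Ham_j$, the constraints express $\{u_j\}\in\Upsilon$ along with any algebraic relations among the $\Ham_j$, and the objective is the polynomially representable $J$ evaluated at $\exp\sum_{m=1}^{\overline m}\tilde\Omega_m(K)$ expanded to sufficient order.

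Given such an NCPOP, I then invoke the Navascu\'es--Pironio--Ac\'in / Helton--McCullough hierarchy as used in \cite{Pironio2010}: under Assumption~\ref{Archimedean}, the sequence of semidefinite relaxations indexed by the admissible moment-matrix degree $d$ converges monotonically to the true NCPOP optimum, and the Helton--McCullough sum-of-squares theorem underwrites dual feasibility. Selecting $d=d(\epsilon)$ large enough to close the final $\epsilon/3$ gap yields a convex semidefinite program whose optimum is within $\epsilon$ of the optimum of \eqref{eq:QOC}; such an SDP is solvable to arbitrary accuracy in the Blum--Shub--Smale model \cite{blum1989} by interior-point methods, and an $\epsilon$-optimal control is extracted from the atomic decomposition of an optimal moment matrix (or, equivalently, from the primal solution when the rank-loop condition holds).

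The main obstacle is the quantitative coupling between the three approximation levels. The truncation order $\overline m$, the discretization parameter $K$, and the relaxation degree $d$ interact through the Lipschitz constants of $\exp$ and of $J$ and through the norm bound that enters Assumption~\ref{Archimedean}. Keeping the aggregate error below $\epsilon$ requires a bound on $\|\hat A\|_2$ that is uniform over $\Upsilon$ and independent of $\overline m$, and an Archimedean constant for the discretized problem that is independent of $K$. Establishing this uniformity---and, relatedly, verifying that truncating the series inside $\exp$ does not spoil the polynomial representability of $J\circ\exp$ up to the chosen accuracy---is the delicate technical point; the remaining ingredients (Magnus tail estimates, quadrature error, and convergence of the NPA hierarchy) are standard once the uniform bounds are in place.
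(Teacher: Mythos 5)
Your proposal follows essentially the same five-step route as the paper's own proof: existence of control from Assumption~\ref{Magnus} and the definition of ME$(R(\U(0)),\underline m)$, convergence of the Magnus tail via Proposition~\ref{convergentMagnus}, time discretisation with controlled error, convergence of the SDP hierarchy of \cite{Pironio2010} under Assumption~\ref{Archimedean}, and extraction of the minimiser. The only notable divergence is in the last step, where you appeal to an atomic decomposition of the moment matrix (the commutative-style extraction), whereas the paper uses the Gelfand--Naimark--Segal construction as the primary noncommutative extraction tool, reserving the rank-loop Gram decomposition as the alternative; your explicit $\epsilon/3$ error budgeting and the flagged uniformity issues are refinements the paper leaves implicit.
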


\arxiv{
\begin{proof}
The proof proceeds in five steps:
first, we need to show that a control exists. 
This is from Assumption \ref{Magnus} and by definition of the reachable set ME$(R(\U(0)), \underline m)$.
Second, we need to show that the Magnus expansion converges. 
This is from Proposition \ref{convergentMagnus}, considering the terminal time $T$ of Assumption \ref{Magnus}.
Third, we need to show that there exists a discretisation that 
 introduces an error of $\delta < \epsilon$.
This could be a uniform discretisation of time with a sufficiently small time step $\Delta t$, as explained in the Supplementary Material\arxiv{ (p. \pageref{discretisation})}. 
Fourth, we need convergence of the series of 
semidefinite-programming (SDP) relaxations
of the discretised non-commutative polynomial optimisation problem.
This is by Theorem 1 of \cite{Pironio2010}, which requires Assumption \ref{Archimedean}; cf. also  \cite{navascues2007bounding,burgdorf2016optimization}, ultimately based on the method of moments \citearxiv{akhiezer1962some}, as explained in the Supplementary Material\arxiv{ (p. \pageref{sec:nc-pop})}.
Finally, we need the extraction of the minimizer from the SDP relaxation of order $r(\epsilon, \delta) \ge \lceil m/2 \rceil$ in the series. 
For this, one utilises the Gelfand--Naimark--Segal (GNS) construction \cite{gelfand1943imbedding,segal1947irreducible}, as explained in Section 2.2 of  \cite{klep2018minimizer}\arxiv{; cf. also  \cite[Section 2.6]{dixmier1969algebres}}.
Alternatively, under restrictive conditions known as the rank loop, we can utilise 
Gram decomposition \cite[cf. proof of Theorem 2]{Pironio2010}.
\end{proof}
}

Notice that we use Magnus expansion in two ways here:
First, $\underline m$ steps in the Magnus expansion guarantee we can reach any target in  $\textrm{ME}(R(\U(0))$. Second, we need $m(\epsilon) \ge \underline m$ number of steps to achieve the convergence within $\epsilon$ error introduced by the Magnus expansion. 
The error of Magnus expansion decays with $B_m/m!$ where $B_m$ is the $m$th Bernoulli number, but it is further compounded by the error in the SDP relaxation of the NCPOP of degree $r(m)$ and the discretisation.
\nonarxiv{
Subsequently, we utilise the Gelfand--Naimark--Segal (GNS) construction \cite{gelfand1943imbedding,segal1947irreducible}, as explained in Section 2.2 of  \cite{klep2018minimizer}.}
The GNS construction is robust to small errors; we can apply \cite[Theorem 3.2]{klep2018minimizer} directly, if there are no constraints $\Upsilon$ \arxiv{\footnote{For other $\Upsilon$s and other constructions \cite{henrion2005detecting}, similar results would have to be developed, yet.}.}

Also notice that in the model of Blum, Shub, and Smale \cite{blum1989},
where basic arithmetic operations with real numbers are atomic, they suggest
not considering floating-point approximation of real numbers and the propagation of the corresponding errors in the analysis.
We discuss the computability and iteration complexity in more detail in the Supplementary Material\arxiv{ (p. \pageref{sec:computability})}.


When we know that the system is operator-controllable, i.e., for any $\U^* \in U(N) \subset \CC^{N \times N}$ there exists a horizon $T$ and an admissible control to drive the initial state $\U(0) = \hat{I}$ to $\U(T) = \U^*$, we can simplify the result as follows:


\begin{coro}
\label{T1}
For an operator-controllable system, 
for any initial state $\U(0)$ and any target state $\U^*$,
and any $\epsilon > 0$,
there exist a number $m$ of terms in the Magnus expansion, 
such that the $\epsilon$-optimal control with respect to fidelity 
can be extracted  
under Assumptions 
\ref{Magnus} and 
\ref{Archimedean}.
\end{coro}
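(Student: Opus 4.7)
The plan is to deduce this corollary directly from Theorem \ref{T2} by specialising to operator-controllable systems and to the fidelity functional. First, operator controllability means by definition that $R(\U(0)) = U(N)$, so the target $\U^*$ lies in the reachable set. By Proposition \ref{LieAlgebraDimensionWithMagnus}, the Lie algebra generated by the Magnus expansion must attain dimension $N^2 - 1$; since this dimension is saturated at some finite order $\underline m$ of nested commutators, one has $\U^* \in \textrm{ME}(R(\U(0)), \underline m)$ for that finite $\underline m$. This supplies the lower bound on the number of Magnus terms required as an input to Theorem \ref{T2}.

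Second, I would verify that the fidelity functional is polynomially representable in the sense required by Theorem \ref{T2}. For the standard choice $F(\U(T),\U^*) = |\tr(\U^{*\dagger} \U(T))|^2 / N^2$, the quantity $1 - F$ is a quadratic polynomial in the real and imaginary parts of the matrix entries of $\U(T)$; setting $J = 1 - F$ produces a polynomial functional of $\U(T) = \exp\bigl(\sum_{m=1}^{\overline m} \Omega_m(T)\bigr)$ after expanding the exponential to a sufficient order. One then invokes Theorem \ref{T2} with $(\underline m, J)$ and tolerance $\epsilon$, obtaining $m(\epsilon) \ge \underline m$ together with an $\epsilon$-optimal control extractable from the associated SDP relaxation via the Gelfand--Naimark--Segal construction.

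The hard part, inherited from Theorem \ref{T2}, is bookkeeping of the error budget. The total deviation from optimality decomposes into the Magnus truncation error (controlled by Proposition \ref{convergentMagnus} together with Assumption \ref{Magnus}), the time-discretisation error, the truncation error of the matrix exponential needed to render the objective strictly polynomial in the $\Omega_m$, and the residual error of the finite-order SDP relaxation. Ensuring that their sum stays below $\epsilon$ while simultaneously preserving $m(\epsilon) \ge \underline m$ so that reachability is maintained is the only nontrivial step; it is routine once the decomposition of Theorem \ref{T2} is in hand, since each of the four error contributions can be made arbitrarily small by independent parameters (Magnus order, time step, exponential-series order, SDP-hierarchy level).
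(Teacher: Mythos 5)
Your proposal matches the paper's (implicit) argument: the corollary is presented as a direct specialisation of Theorem \ref{T2} to the operator-controllable case $R(\U(0)) = U(N)$ with $J$ taken to be the fidelity, and the paper supplies no further proof beyond that reduction. The only divergence is a detail of representability: where you render the objective polynomial by truncating the power series of the matrix exponential (adding a fourth term to the error budget), the paper prefers to work at the Lie-algebra level, comparing $\sum_{m=1}^{\overline m}\Omega_m(T)$ against a pre-computed $\log \U^*$ or invoking the semidefinite representability of monotone trace functions, which sidesteps that extra truncation error.
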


\arxiv{
The assumption of operator controllability, cf. Proposition \ref{LieAlgebraDimensionWithMagnus}, can be relaxed to approximate controllability in the sense of \cite[Chapter 4]{Borzi2017}, i.e., reaching a dense set from any initial state. The analogous corollary follows from  \cite[Theorem 17]{boscain2015approximate}\arxiv{, which in turn is based on the work of Smith \cite{smith1942}}. 
}

We illustrate the results in the Supplementary Material, first on a model of a transmon qubit\arxiv{ (p. \pageref{sec:1q})}, and then on pulse shaping for a two-qubit gate\arxiv{ (p. \pageref{sec:2q})}. In these two examples, there are well-known techniques (starting with the rotating-wave approximation\citearxiv{boscain2004resonance,Magesan_19}) that allow modest numbers of time steps to produce decent discretisations.
Combined with the fidelity of two-qubit gates as a control objective,
this may be the first important application.

\section{Conclusions}

We have presented an approach to quantum optimal control that exhibits global convergence, in theory, and relies on non-trivial but well-developed tools from non-commutative geometry and mathematical optimisation, in practice. In contrast to other quantum control approaches, the use of Magnus expansion provides a proper solution to the initial value problem of the Schr\"odinger equation involving time-dependent Hamiltonian. This has a significant impact on the controllability of quantum systems in that it expands the reachable set both in its dimension and volume. This opens new avenues for research and engineering in quantum control and its applications.

While in the opening paragraph, we have suggested a wide array of possible applications, perhaps the most interesting ones, in the near term, are within quantum computing. 
In Noisy Intermediate Scale Quantum (NISQ) devices, 
fidelity of two-qubit gates is low, but can be improved
with better solvers for quantum optimal control of a 4-level system,
clearly within reach of the proposed method. 
Eventually, one could perhaps replace the application of an entire quantum circuit with the application of a control signal. 
This would make it possible to move beyond the quantum circuit model and the associated intricacy of approximate compiling and swap mapping to accommodate connectivity constraints present in many qubit technologies.

Globally convergent methods for quantum optimal control also make it possible to study the inherent limits of quantum systems. The specific objective could be to map the parameter landscape of a specific physical platform \cite{goerz2017charting} such as a transmon qubit\arxiv{ \cite{Magesan_19}}, and thus identify efficient operating regimes, for example. 
Indeed, the use of optimal-control methods that are not globally convergent
maps the parameter landscape reflecting both the particular combination of an optimal-control method as well as the underlying physical platform, rather than just the platform itself. 
Perhaps similar to \cite{nielsen2006quantum},
our method therefore enables a more profound understanding of quantum systems. 

\paragraph*{Acknowledgments}
\begin{acknowledgments}
The authors would like to thank Jason Crain, Daniel Egger, Didier Henrion, Christiane Koch, Vyacheslav Kungurtsev, and Monique Laurent for comments that have improved the presentation of the work.
This work has been supported by European Union’s Horizon 2020 research and innovation programme under the Marie Skłodowska-Curie Actions, grant agreement 813211 (POEMA).
Authors' contributions: J.M. and J. V. designed the research and performed 
the formal analysis and wrote the original draft.
Competing interests: no competing interests.
Data and materials availability: all data is available in the manuscript or the Supplementary Materials.
\end{acknowledgments}

\arxiv{
\clearpage
\appendix
\onecolumngrid
\section{Supplementary Material}

\subsection{Functionals}
\label{sec:functionals}

Our results on the quantum optimal control:

\begin{align}
    \displaystyle\min_{\U(t), \{u_j(t)\} \in \Upsilon} & J \left(\U (t), \{u_j(t)\} \right) \notag \\
    \mbox{s.t.} ~~&\partder{}{t}{} \U (t) =  \left[\sum_j u_j(t) ~\Ham_j/i\hbar\right] \U (t), \non \\
    &\U(0) = \hat{I}
    \non
\end{align}

are applicable to a wide-range of functionals $J$, as long as $J$ can be represented by a polynomial in the original and additional scalar and matrix variables, subject to some matrices or submatrices being positive semidefinite, and subject to polynomial constraints either in scalar or matricial variables. This extends the notion of semidefinite representability \citearxiv{bental2001lectures,Sagnol2013} and the notion of polynomial representability.

\paragraph{Textbook state-related functionals}
Typically, one considers functionals wherein at least one summand carries some notion of similarity of the terminal state $\hat{\rho}(T)$ and the target state $\hat{\rho}^*$.
The proper distance measure is provided by the trace distance defined as
\begin{align}
\label{eq:trace}
    \textrm{trace distance} \quad & \frac{1}{2} \tr \abs{\hat{\rho}(T) - \hat{\rho}^*}
\end{align}
where $\abs{A} = \sqrt{A^\dagger A}$. The trace distance generalizes the Kolmogorov distance between two probability distributions \cite{Nielsen_00}.
A variety of matrix norms can also be considered \cite{Horn_85}. A particularly useful functional in the context of unitary operations can be derived from the Frobenius norm:
\begin{align}
\label{eq:frob}
    \textrm{Frobenius functional} \quad & \sqrt{\tr \left(\U^{*\dagger}\U(T)\right)}.
\end{align}
This functional can easily be augmented by the addition of other terms which are used to control leakage of the quantum population to the rest of a larger Hilbert space and other undesired processes.

Alternatively, one may consider fidelity, which is defined as follows:
\begin{align}
\textrm{fidelity } \quad & \tr \sqrt{{\sqrt {\hat{\rho}^*}} ~\hat{\rho}(T) ~{\sqrt {\hat{\rho}^*}}}
\end{align}
where both $\hat{\rho}(T)$ and $\hat{\rho}^*$ are mixed states in general. If the target state is pure, i.e. $\hat{\rho}^* = \ket{\psi}\bra{\psi}$, it is easy to show that the fidelity simplifies to $F(\hat{\rho}(T),\hat{\rho}^*) = \sqrt{\bra{\psi} \hat{\rho}(T) \ket{\psi}}$. 
The fidelity is not a proper distance measure, but it satisfies several similar properties: It is invariant under unitary transforms and symmetric in its inputs $F(\hat{\rho},\hat{\sigma}) = F(\hat{\sigma},\hat{\rho})$. Its values are $0 \ge F(\hat{\rho},\hat{\sigma}) \ge 1$
where $F(\hat{\rho},\hat{\sigma}) = 0$ if the states $\hat{\rho}$ and $\hat{\sigma}$ have support on orthogonal subspaces, and $F(\hat{\rho},\hat{\sigma}) = 1$ if and only if $\hat{\rho} = \hat{\sigma}$ \cite{Nielsen_00}.


\newpage


\begin{figure*}[htb]
\begin{tikzpicture}
\tikzset{edge/.style = {->,> = latex'}}
\node [draw, minimum width=7cm, minimum height = 1cm] (A1) {$\Omega(T) = \sum_{m = 1}^{\infty} \Omega_m (T) \in \su(N)$};

\node [draw, minimum width=7cm, minimum height = 1cm, above=of A1](H){$\hat{H}(t) = \sum_j u_j(t) ~\Ham_j \in \su(N)$};
\node [minimum width=7cm, minimum height = 1cm, right=of H] (Blank){};

\node [minimum width=7cm, above=of H](Header1){State at terminal time $T$:};
\node [minimum width=7cm, right=of Header1] (Header2){Target:};

\node [draw, minimum width=7cm, minimum height = 1cm, below=of A1](B1){$U(T) = \exp(\Omega(T)) \in SU(N)$};

\node [draw, minimum width=7cm, minimum height = 1cm, right=of A1] (A2){$\Omega^* \in \su(N)$};
\node [draw, minimum width=7cm, minimum height = 1cm, below=of A2](B2){$\U^* \in SU(N)$};

\draw [->, to path={-| (\tikztotarget)}] (H.south) to (A1.north)  node [above right] {Magnus expansion at $T$};

\draw [dashed] (A1.east) edge (A2.west);
\draw [->, to path={-| (\tikztotarget)}] (A1.south) to (B1.north)  node [above right] {$\exp$};
\draw [dashed] (B1.east) edge (B2.west);
\draw [->, to path={-| (\tikztotarget)}] (B2.north) to (A2.south)  node [below right] {$\log$};
\end{tikzpicture}
\caption{The ``algebra'' and ``group'' view of the quantum optimal control problem \ref{eq:QOC}.}
\label{fig:twolevels}
\end{figure*}

\paragraph{State-related terms implemented}

Notice that in \eqref{eq:QOCNC}, we actually use $\sum_{m = 1}^{\overline m} \Omega_m (T)$ as the first argument of the functional,
rather than $\U(t)$. 
Trivially, one could argue that from the Cayley–Hamilton theorem, the matrix exponential is polynomially representable. Indeed, from the power-series expansion:
\begin{align}
\exp (\Omega) = \sum_{k=0}^\infty\frac{\Omega^k}{k!} = I + \Omega + \frac{1}{2}\Omega^2 + \frac{1}{6}\Omega^3 + \cdots,
\end{align}
but this would not allow for an efficient implementation. 

An efficient implementation requires some additional work, relating the algebra $\su(N)$ and the group $SU(N)$, as suggested in Figure \ref{fig:twolevels}, with details dependent on the functional. 
In the trace distance, the most efficient option is to pre-compute the matrix logarithm of $\hat{\rho}^*$, numerically \cite[Chapter 11]{higham2008functions}, and then compare directly against  
$\sum_{m = 1}^{\overline m} \Omega_m (T)$. In order to do this, we have to relate the minimiser of the trace distance and the exponential of a minimiser of the trace of the original variable, as suggested in Figure \ref{fig:twolevels}, while respecting periodicity of the unitary operators as functions of the algebra elements:

\begin{proposition}[]
Let $\arg \min_{\Omega(T) \in \su(N)} \tr\abs{\Omega(T) - \log \U^*} $ be the global minimiser of $\Omega(T)$ with the global minimum being $0$.
Let $\arg \min_{\exp(\Omega(T)) \in SU(N)} \tr(\exp\abs{\Omega(T)) - \U^*}$ be the set of global minimisers of $U(T)$ with the global minimum being 0.
Then, 
\begin{align}
\exp \left( \arg \min_{\Omega(T) \in \su(N)} \tr\abs{\Omega(T) - \log \U^*} \right)  \in \arg \min_{\exp(\Omega(T)) \in SU(N)} \tr\abs{\exp(\Omega(T)) - \U^*},
\end{align}
where $\abs{A} = \sqrt{\hat{A}^\dagger \hat{A}}$.
\end{proposition}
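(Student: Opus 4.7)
The plan is straightforward: use the fact that both objectives attain minimum value exactly zero, so the algebra-level minimiser must coincide with a logarithm of $\U^*$, and then apply $\exp$ to land on $\U^*$ itself. The only real subtlety is that $\exp : \su(N) \to SU(N)$ is surjective but not injective, so $\log \U^*$ is multi-valued; the set-containment $\in$ in the conclusion is precisely what absorbs this ambiguity.

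First, I would record the elementary fact that for any square matrix $A$, the quantity $\tr\abs{A} = \tr\sqrt{A^\dagger A}$ equals the sum of singular values of $A$, hence is nonnegative and vanishes if and only if $A = 0$. Applied to $A = \Omega(T) - \log \U^*$ (with a fixed branch of $\log$ on $SU(N)$, e.g. the principal branch, so that $\exp\log \U^* = \U^*$), the hypothesis that the algebra-level global minimum equals zero forces any $\Omega^\star$ in the argmin to satisfy $\Omega^\star = \log \U^*$ as elements of $\su(N)$.

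Next, I would exponentiate both sides: $\exp(\Omega^\star) = \exp(\log \U^*) = \U^*$, and substitute into the group-level objective to get $\tr\abs{\exp(\Omega^\star) - \U^*} = \tr\abs{0} = 0$, which by hypothesis is the group-level global minimum. Hence $\exp(\Omega^\star)$ lies in the argmin set of $\tr\abs{\exp(\Omega(T)) - \U^*}$, which is exactly the claimed containment.

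The hardest part is not a computational obstacle but a notational one: the statement writes ``the global minimiser'' on the algebra side as if unique, but $\exp$ on $\su(N)$ is $2\pi$-periodic along Cartan directions, so different branches of $\log \U^*$ give different $\Omega^\star$'s, all of which exponentiate back to the same $\U^*$. To make the proof watertight I would explicitly fix one branch of $\log$ on $SU(N)$ at the outset, note that any such branch supplies a valid $\Omega^\star$ with zero objective, and remark that the reverse containment need not hold (the group-level argmin is the singleton $\{\U^*\}$, whereas the algebra-level argmin is $\exp^{-1}(\U^*) \subset \su(N)$). With this clarification the chain $\Omega^\star = \log \U^* \Rightarrow \exp(\Omega^\star) = \U^* \Rightarrow$ zero group-level objective is immediate, and no further work is required.
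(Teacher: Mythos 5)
Your proof is correct and follows essentially the same elementary route as the paper's: zero trace distance forces matrix equality, and exponentiating $\Omega^\star = \log \U^*$ yields $\exp(\Omega^\star) = \U^*$, hence a zero value of the group-level objective and membership in its argmin set. In fact your version is the more careful of the two --- you argue in the direction actually needed for the stated containment and explicitly fix a branch of $\log$, whereas the paper's own proof runs the implication backwards and asserts ``$\U = \exp(\Omega) = \exp(\Omega^*) = \U^*$ and hence $\Omega = \Omega^*$'', a step that fails in general because $\exp$ is not injective on $\su(N)$ (precisely the periodicity along Cartan directions that you flag).
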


\begin{proof}
When the trace distance $\tr\abs{\U - \U^*}$ between an element of the unitary operator $\U = \exp(\Omega(T))$, produced by an optimal control procedure, and the operator $\U^*$, representing the target, is zero, then $\U = \exp(\Omega)= \exp(\Omega^*) = \U^*$ and hence $\Omega = \Omega^*$. In both cases, $\exp(\Omega) ~\exp(-\Omega) = \exp(-\Omega) ~\exp(\Omega) = \hat{I}$ where $\hat{I}$ is an identity operator.  
\end{proof}





Similar reasoning can be applied to other distances $D$:

\begin{proposition}[]
Let $\arg \min_{\Omega(T) \in \su(N)} D(\Omega(T), \log \U^*) $ be the global minimiser of $\Omega(T)$ with the global minimum being $0$.
Let $\arg \min_{\exp(\Omega(T)) \in SU(N)} D(\exp(\Omega(T)) - \U^*)$ be the set of global minimisers of $U(T)$ with the global minimum being 0.
Then, 
\begin{align}
\exp \left( \arg \min_{\Omega(T) \in \su(N)} D(\Omega(T),  \log \U^*) \right)  \in \arg \min_{\exp(\Omega(T)) \in SU(N)} D(\exp(\Omega(T)), \U^*).
\end{align}
\end{proposition}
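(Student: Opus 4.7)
The plan is to mimic the trace-distance argument from the previous proposition, but to rely only on the abstract metric axioms defining a distance $D$ rather than on any specific features of $\tr\abs{\,\cdot\,}$. Concretely, the two hypotheses state that (i) the algebra-side global minimum exists and equals $0$, attained at some $\Omega_0$, and (ii) the group-side global minimum exists and equals $0$. Invoking the definiteness axiom $D(x,y)=0 \iff x=y$ for any $\Omega_0$ achieving the algebra-side minimum yields $\Omega_0 = \log\U^*$ for a fixed branch of the matrix logarithm. Applying $\exp$ gives $\exp(\Omega_0) = \U^*$, hence $D(\exp(\Omega_0), \U^*) = D(\U^*,\U^*) = 0$; since by (ii) this coincides with the group-side global minimum, $\exp(\Omega_0)$ lies in the group-side arg min set, which is the claimed inclusion.

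The structural steps, in order, are: first, instantiate definiteness of $D$ on $\su(N)$ to turn the vanishing of $D(\Omega_0, \log \U^*)$ into the equality $\Omega_0 = \log \U^*$; second, use the identity $\exp(\log \U^*) = \U^*$ on the chosen branch to transport the equality from the algebra to the group; third, apply definiteness of $D$ on $SU(N)$ to conclude that the group-side distance evaluated at $\exp(\Omega_0)$ is $0$; fourth, compare against the hypothesized group-side global minimum $0$ to certify membership in its arg min.

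The main obstacle is the multivaluedness of the matrix logarithm on $SU(N)$: the exponential map $\su(N) \to SU(N)$ is surjective with a lattice of preimages, so $\log \U^*$ is not literally a single element of $\su(N)$. The cleanest resolution is to fix, at the outset, any branch of $\log$ having $\U^*$ in its image; the argument above then proceeds verbatim, and because the conclusion is phrased as an inclusion rather than an equality, it suffices to exhibit the image under $\exp$ of a single chosen preimage. If instead one wishes every algebra-side minimizer to map into the group-side arg min, one reinterprets the algebra-side objective as $\min_{\Omega^* :\, \exp(\Omega^*) = \U^*} D(\Omega, \Omega^*)$, after which the same definiteness argument applies preimage by preimage.
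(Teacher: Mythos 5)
Your proof is correct and follows essentially the same route as the paper's own two-line argument: definiteness of the distance $D$ forces the algebra-side minimiser to equal $\log \U^*$, and $\exp(\log \U^*)=\U^*$ then places its exponential in the group-side arg min. If anything, your version is more careful than the paper's — you run the implication in the correct direction (algebra to group, which needs only that $\exp$ is a well-defined map, not that it is injective) and you explicitly fix a branch of the matrix logarithm, a point the paper's proof glosses over.
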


\begin{proof}
The proof follows the same steps as above. When $D( \U, \U^*) = \tr\abs{\U - \U^*}$ between an element of the unitary operator $\U = \exp(\Omega(T))$, produced by an optimal control procedure, and the operator $\U^*$, representing the target, is zero, then $\U = \exp(\Omega)= \exp(\Omega^*) = \U^*$ and hence $\Omega = \Omega^*$. In both cases, $\exp(\Omega) ~\exp(-\Omega) = \exp(-\Omega) ~\exp(\Omega) = \hat{I}$ where $\hat{I}$ is an identity operator.
\end{proof}

Alternatively, one should realise that 
while a matrix exponential is not an operator monotone, its trace is a monotone trace function \cite[Section 8.3.2 Monotone Trace Functions]{tropp2015introduction}:

\begin{proposition}[Theorem 5 in \cite{Fawzi2019}]
Let $g$ be a monotone trace function and let $a > 1$. Then for any $\epsilon > 0$ there
is a rational function $r$ such that $|r(x) - g(x)| \le \epsilon$ for all $x \in [1/a, a]$, and $r$ has a semidefinite
representation of size $O(\log(1/\epsilon))$.
\end{proposition}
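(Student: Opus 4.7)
The plan is to combine a classical integral representation of operator monotone functions with an exponentially convergent rational quadrature, followed by a Schur complement trick that lifts the resulting rational approximant to a linear matrix inequality. Since $g$ is a monotone trace function, by L\"owner's theorem it is induced by an operator monotone function on $(0,\infty)$ admitting an integral representation
\begin{equation}
g(x) = \alpha + \beta x + \int_0^\infty \frac{x}{x+t}\, d\mu(t)
\end{equation}
with $\alpha, \beta \ge 0$ and a positive measure $\mu$ on $(0,\infty)$ satisfying suitable moment conditions. The kernel $x \mapsto x/(x+t)$ is the basic building block that is both easy to quadrature and easy to lift to an SDP.

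The next step is to approximate the integral by a quadrature rule of the form $r(x) = \alpha + \beta x + \sum_{i=1}^{N} w_i\, x/(x+t_i)$ with $w_i \ge 0$ and $t_i > 0$, chosen so that $|r(x) - g(x)| \le \epsilon$ uniformly on $[1/a, a]$. The key quantitative point is that, after a conformal change of variables mapping $[1/a, a]$ into a fixed reference interval and the half-line $[0,\infty)$ into a compact contour, the integrand becomes analytic in a strip whose width depends only on $a$. Gauss--Jacobi or Zolotarev-type quadrature applied to such an analytic integrand converges geometrically, so $N = O(\log(1/\epsilon))$ nodes suffice, with the hidden constant depending only on $a$.

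The final step is to write $r$ as an SDP. Each summand $w_i\, x/(x+t_i) = w_i - w_i t_i/(x+t_i)$ is semidefinite representable by a constant-size LMI: introducing an auxiliary scalar $s_i$, the Schur complement applied to the $2 \times 2$ positive semidefinite block with diagonal entries $(x + t_i, s_i)$ and off-diagonal entry $\sqrt{w_i t_i}$ enforces $s_i \ge w_i t_i/(x + t_i)$ on the positivity domain $x + t_i > 0$, which covers $[1/a, a]$. Stacking $N$ such blocks and adding the affine term $\alpha + \beta x$ yields an SDP representation of total size $O(N) = O(\log(1/\epsilon))$, as claimed.

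The main obstacle is the quadrature step: producing a rule with uniform error at most $\epsilon$ using only $O(\log(1/\epsilon))$ nodes rather than the polynomial $O(\epsilon^{-c})$ one gets from naive composite rules. Here one leans on Zolotarev's classical theorem on best rational approximation of $\sqrt{x}$ and its extensions to Stieltjes/Markov-type functions; analyticity of the kernel $x/(x+t)$ in a strip, which follows from bounding $x$ away from $0$ on $[1/a,a]$ and choosing the contour for $t$ to stay away from the negative real axis, is precisely what converts polynomial error into exponential error in the number of quadrature nodes.
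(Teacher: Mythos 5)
First, a point of comparison: the paper does not prove this proposition at all --- it is imported verbatim from the cited reference, so there is no in-paper argument to measure your attempt against. Your reconstruction (integral representation, exponentially convergent quadrature of the resulting Markov function, and a Schur-complement lift of each partial fraction $w_i\,x/(x+t_i)$ into a $2\times 2$ LMI of constant size) is exactly the architecture used in that source, and the individual steps you sketch --- positivity of the weights, analyticity giving geometric decay of the quadrature error with a constant depending only on $a$, total representation size $O(N)$ --- are sound as far as they go.

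The genuine gap is in your opening sentence. L\"owner's theorem characterizes \emph{operator} monotone functions, whereas a ``monotone trace function'' in the sense the surrounding text relies on (Tropp, Section 8.3.2) is the trace lift of a function that is merely nondecreasing as a scalar function; indeed the paper invokes this proposition precisely for $g=\exp$, immediately after remarking that $\exp$ is \emph{not} operator monotone. For such $g$ the representation $\alpha+\beta x+\int_0^\infty \frac{x}{x+t}\,d\mu(t)$ with $\mu\ge 0$ is unavailable and your argument does not get off the ground; worse, for a general monotone $g$ the conclusion is false as stated (a nondecreasing function with a jump of size $3\epsilon$ inside $[1/a,a]$ has no continuous, let alone rational, uniform $\epsilon$-approximation). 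So either the hypothesis must be strengthened --- to operator monotonicity, to $g$ being a Stieltjes/Markov function, or to analyticity in a neighbourhood of $[1/a,a]$ --- in which case your proof works but no longer covers the application made of it here, or a separate argument is needed for the trace-monotone case. A smaller issue: for a general operator monotone $g$ the representing measure $\mu$ is known only abstractly, so ``apply Gauss--Jacobi quadrature to $\mu$'' is not constructive; the standard repair is to invoke the exponential decay of Zolotarev numbers for the pair $[1/a,a]$ and $(-\infty,0]$, which bounds the best rational approximation of any Markov function with $N=O(\log a\cdot\log(1/\epsilon))$ poles and positive residues, uniformly over $\mu$, after which your Schur-complement lift applies unchanged.
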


A matrix logarithm is also semidefinite-representable \cite{Fawzi2019}, which makes it possible to compare against $\log(\U^*)$ even if $\U^*$ were a variable, for instance within robust optimisation.


\paragraph{Control-related terms}

Alternatively, 
fidelity can be used as a constraint, rather than a part of the objective.
In the objective, one could consider the energy of the control signal in a functional such as:
\begin{align}
\textrm{non-isotropic energy } \quad & \sum_{t=1}^{T} \sum_j  \frac{1}{\mu_j} \left | u_j(t) \right |^2,\\
\textrm{non-isotropic length } \quad & \sum_{t=1}^{T} \sqrt { \sum_j \frac{1}{\mu_j} \left | u_j(t) \right |^2} ,\\
\textrm{non-isotropic area } \quad & \sum_{t=1}^{T} \sum_j \frac{1}{\mu_j} \left | u_j(t) \right |,
\end{align}
for some scalar constants $\mu_j$.
In this case, there are obviously no issues with representability of the functional.





\clearpage
\section{Discretization}
\label{discretisation}

In order to evaluate the Magnus expansion numerically, we consider discrete representation of all functions of time and a suitable quadrature for evaluation of the definite integrals for the total time $T$ divided into $K$ knots $t_k$ \cite{singh2018high} 
\begin{align}
    \int_0^T ~f(t)~dt \approx \sum_{k=1}^K~w_kf(t_k) \non
\end{align}
where $w_k$ are the quadrature weights.

Formally, we perform the discretisation as follows:
\begin{align}
    \hat{U}(T) &= \mbox{exp}\left(\sum_{m=1}^{\infty} ~\Omega_m(t)\right) \rightarrow \exp\left(\sum_{m=1}^{\infty} ~\hat{\Omega}_m(t)\right)
\end{align}
where we have chosen  $t_k = k~\Delta t$ for $k = 1, \dots, K$. The individual terms of the Magnus expansion transform as 
\begin{align}
    \Omega_m (T) &= \sum_i ~\hat{O}_i\int_0^{T} dt_1 \int_0^{t_1} dt_2 \dots \int_0^{t_m} dt_{m+1} F_i(\{u_j\}(t_1,\dots,t_{m+1})) \non \\
    \rightarrow 
    \tilde{\Omega}_m (T) &= \sum_i ~\hat{O}_i ~\Delta t^m \sum_{k_1 = 1}^{K} \sum_{k_2 = 1}^{k_1} \dots \sum_{k_{m} = 1}^{k_{m-1}}  ~\tilde{F}_i(\{\tilde{u}_j\}(k_1,\dots,k_m)) 
\end{align}
where $\sum_i \hat{O}_i\tilde{F}_i(\{\tilde{u}_j\}(t_1,\dots,t_{m+1}))$ emerges from the nested commutators of the Magnus expansion which involves the Hamiltonian $\Ham(t) = \sum_j u_j(t) \Ham_j$ at different times.
The operators $\hat{O}_i$ result from the commutators between the Hamiltonian operators $\Ham_j$, and $F_i(\{u_j\}(t_1,\dots,t_{m+1}))$ is a polynomial function of the time-dependent controls $u_j(t)$ at different times, $t_1,\dots,t_{m+1}$, originating from the same commutator.
For example, at first order, these functions are the controls themselves, at the second order they are products of controls relevant to different noncommuting terms of the Hamiltonian $\Ham_j$ at time $t_1$ and $t_2$, and so on. Furthermore $\tilde{\Omega}_m$, $\tilde{F}$ and $\tilde{u}$ denote appropriate entities in their discretised form.

Explicitely
\begin{align}
\tilde{\Omega}_1(T) &= \Delta t~\sum_{k_1 = 1}^K ~\tilde{\hat{A}}(k_1), \non \\
\tilde{\Omega}_2(T) &= \frac{1}{2} ~\Delta t^2 ~\sum_{k_{1} = 1}^{K}   ~\sum_{k_{2} = 1}^{k_{1}} ~\comm{\tilde{\hat{A}}(k_{1})}{\tilde{\hat{A}}(k_{2})}, \non \\
\tilde{\Omega}_3(T) &= \frac{1}{6} ~\Delta t^3 ~\sum_{k_{1} = 1}^{K}   ~\sum_{k_{2} = 1}^{k_{1}} ~\sum_{k_{3} = 1}^{k_{2}} \left(\comm{\tilde{\hat{A}}(k_{1})}{\comm{\tilde{\hat{A}}(k_{2})}{\tilde{\hat{A}}(k_{3})}} + \comm{\comm{\tilde{\hat{A}}(k_{1})}{\tilde{\hat{A}}(k_{2})}}{\tilde{\hat{A}}(k_{3})} \right), \non \\
& \dots \non
\end{align}

where the discretised operators 
\begin{align}
    \tilde{\hat{A}}(k) = \sum_j \tilde{u}_j(k)~\Ham_j / i\hbar \non
\end{align}

and the nested commutators lead to the following structure of the Magnus operators
\begin{align}
\tilde{\Omega}_1(T) &= \frac{\Delta t}{i \hbar} ~\sum_{j}~\Ham_j~\sum_{k_1 = 1}^K ~\tilde{u}_j(k_1) = ~\sum_j ~\hat{O}_j^{(1)} ~\Delta t \sum_{k_1 = 1}^{K} ~\tilde{F}_j^{(1)}(\{\tilde{u}_j\}(k_1)) 
, \non \\
\tilde{\Omega}_2(T) &= \frac{1}{2} \frac{\Delta t^2}{(i\hbar)^2} ~\sum_{j_1,j_2} 
~\comm{\Ham_{j_1}}{\Ham_{j_2}}
~\sum_{k_{1} = 1}^{K}   ~\sum_{k_{2} = 1}^{k_{1}} ~u_{j_1}(k_1)~u_{j_2}(k_2)\non \\
&= ~\sum_j ~\hat{O}_j^{(2)} ~\Delta t^2 \sum_{k_1 = 1}^{K} ~\sum_{k_{2} = 1}^{k_{1}} ~\tilde{F}_j^{(2)}(\{\tilde{u}_j\}(k_1,k_2)) 
, \non \\
\tilde{\Omega}_3(T) &= \frac{1}{6} \frac{\Delta t^3}{(i\hbar)^3}
~\sum_{j_1,j_2,j_3} 
\left(\comm{\Ham_{j_1}}{\comm{\Ham_{j_2}}{\Ham_{j_3}}} +
~\comm{\comm{\Ham_{j_1}}{\Ham_{j_2}}}{\Ham_{j_3}}\right)
~\sum_{k_{1} = 1}^{K}   ~\sum_{k_{2} = 1}^{k_{1}} ~\sum_{k_{3} = 1}^{k_{2}} ~u_{j_1}(k_1)~u_{j_2}(k_2)~u_{j_3}(k_3) \non \\
&= ~\sum_j ~\hat{O}_j^{(3)} ~\Delta t^3 ~\sum_{k_{1} = 1}^{K}   ~\sum_{k_{2} = 1}^{k_{1}} ~\sum_{k_{3} = 1}^{k_{2}} 
~\tilde{F}_j^{(3)}(\{\tilde{u}_j\}(k_1,k_2,k_3)), \non \\
\dots \non
\end{align}
where the operators $\hat{O}_j^{(k)}$ result from the nested commutators of the $k$-th term of the Magnus expansion, and $\tilde{F}_j^{(k)}$ are the  corresponding functions of the set of controls $\{u\}_j$ at relevant times labelled by discrete indices $k_1, k_2, k_3, \dots$~.

\clearpage 
\subsection{One qubit}
\label{sec:1q}

Let us illustrate our approach on the simplest possible example. 
We consider a quantum control of a single qubit where the control target is a unitary transformation
$\U^* \in SU(2)$. 
Furthermore, we consider the time dependent Hamiltonian for a spin 1/2 system or one quantum bit of the following form
\begin{align}
\Ham(t) = a(t) \Sz + b(t) \Sx 
\end{align}
where $a(t)$ and $b(t)$ are scalar functions of time, $\Sz = \hbar \sigma_z / 2$ and  $\Sx = \hbar \sigma_x / 2$ are the operators for the $z$-component and $x$-component of the spin angular momentum respectively, and $\sigma_z$ and $\sigma_x$ are the Pauli matrices. The components of the spin angular momentum satisfy the following commutation relations
\begin{align}
\comm{\Sx}{\Sy} = i \hbar \Sz, ~~~~~\comm{\Sy}{\Sz} = i \hbar \Sx, ~~~~~\comm{\Sz}{\Sx} = i \hbar \Sy. 
\end{align}

These commutation relations have an important consequence in that the Hamiltonian does not in general commute with itself at different times
\begin{align} \label{single_qubit}
\comm{\Ham(t_1)}{\Ham(t_2)} = 
i \hbar \left[a(t_1) b(t_2) - a(t_2) b(t_1)\right] \Sy 
\end{align}
provided $a(t)$ and $b(t)$ are not constant in time or are not proportional to each other $a(t) = c b(t)$ where $c$ is a real constant. \\

The individual terms of the Magnus expansion at low orders are then given as
\begin{align}
\Omega_1 (T) &= \frac{1}{i \hbar} \int_{0}^{T} dt_1~\Ham(t_1) = \frac{1}{i \hbar} \Sz \int_{0}^{T} dt_1~a(t_1) + \frac{1}{i \hbar} \Sx \int_{0}^{T} dt_1~b(t_1), \non \\
\Omega_2 (T) &= \frac{1}{i \hbar} \frac{1}{2} ~\Sy ~\int_{0}^{T} dt_1 \int_{0}^{t_1} dt_2 \left[a(t_1) b(t_2) - a(t_2) b(t_1)\right],  \non \\
\Omega_3 (T) &= \frac{1}{i \hbar} \frac{1}{6} \Sx ~\int_{0}^{T} dt_1 \int_{0}^{t_1} dt_2  \int_{0}^{t_2} dt_3 \non \\ 
&\times \left[ a(t_1) a(t_2) b(t_3) - a(t_1) b(t_2) a(t_3) + b(t_1) a(t_2) a(t_3)  - a(t_1) b(t_2) a(t_3) \right] \non \\
&- \frac{1}{i \hbar} \frac{1}{6} \Sz ~\int_{0}^{T} dt_1 \int_{0}^{t_1} dt_2  \int_{0}^{t_2} dt_3 \non \\ 
&\times \left[ b(t_1) a(t_2) b(t_3) - b(t_1) b(t_2) a(t_3) + b(t_1) a(t_2) b(t_3)  - a(t_1) b(t_2) b(t_3) \right] .
\end{align}

\subsection{System with a constant drift and time-dependent driving}

The situation when both coefficients in the Hamiltonian Eq. (\ref{single_qubit}) are time dependent is the most general but perhaps too complicated if one considers experimental realisations. It seems well motivated to consider a Hamiltonian which consists of a constant drift term, given by $a(t) = a$, and a time-dependent driving term 
\begin{align}
\Ham(t) = a \Sz + b(t) \Sx.
\end{align} 
In this case the terms of the Magnus expansion above simplify further yielding
\begin{align}
\Omega_1 (T) &= \frac{1}{i \hbar} a T \Sz + \frac{1}{i \hbar} \Sx \int_{0}^{T} dt_1 ~b(t_1), \non \\
\Omega_2 (T) &= \frac{1}{i \hbar} \frac{a}{2} \Sy \left[ \int_{0}^{T} dt_1 \int_{0}^{t_1} dt_2 ~b(t_2) - \int_{0}^{T} dt_1 ~t_1~b(t_1)\right] \non \\
\Omega_3 (T) &= \frac{1}{i \hbar} \frac{a^2}{6} \Sx \int_{0}^{T} dt_1 \int_{0}^{t_1} dt_2 \int_{0}^{t_2} dt_3 \left[b(t_3) - 2 b(t_2) + b(t_1)\right] \non \\
& - \frac{1}{i \hbar} \frac{a}{6} \Sz \int_{0}^{T} dt_1 \int_{0}^{t_1} dt_2 \int_{0}^{t_2} dt_3 \left[2 b(t_1) b(t_3) - b(t_1) b(t_2) - b(t_2) b(t_3)\right]. 
\label{Magnus_terms}
\end{align}

\subsubsection{Exactly solvable case: linear driving}

In the case that the driving term in the Hamiltonian is the function $b(t) = b t$ for some real constant $b$ then the terms of the Magnus expansion for $n > 1$ can be evaluated analytically \cite{Salzman_87}, yielding the following expressions
\begin{align}
\Omega_{2n + 1} (T) &= 0, \non \\
\Omega_{2n} (T) &= \frac{(-1)^n}{i \hbar} ~ T^{2n+1} ~a^{2n - 1}~b ~B_{2n} ~\Sy 
\end{align}
where $B_{2n}$ are related to Bernoulli numbers and are given for the low order as
\begin{align}
B_2 &= \frac{1}{12}, \non \\
B_4 &= - \frac{1}{720}, \non \\
B_6 &= \frac{1}{30240}, \non \\
B_8 &= -\frac{1}{1209600}, \non \\
B_{10} &= \frac{1}{47900160}. 
\end{align}

The formulas for the Magnus terms in the case $a(t) = a$ and $b(t) = b t$ give explicitly 
\begin{align}
\Omega(T) &= \sum_{k = 1}^{\infty} \Omega_k(T)  \non \\
&= \frac{a}{i \hbar} T \Sz + \frac{1}{i \hbar} \frac{b}{2} T^2 \Sx + \frac{b}{i \hbar} \left[\frac{a}{12} T^3 + \frac{a^3}{720}  T^5 + \frac{a^5}{30240} T^7 + \frac{a^7}{1209600} T^9 + \frac{a^9}{47900160} T^{11} + \dots \right] \Sy. \non \\
&= -i ~\left(\theta_x(T) \sigma_x + \theta_y(T) \sigma_y + \theta_z(T)  \sigma_z\right) \non \\
&= -i \abs{\vec{\theta}(T)} ~\vec{\Theta(T)} \cdot \vec{\sigma} 
\end{align}
where $\abs{\vec{\theta}(T)} = \sqrt{\theta_x(T)^2 + \theta_y(T)^2 + \theta_z(T)^2}$ and  $\Theta_\alpha(T) = \theta_\alpha(T) / \sqrt{\theta_x(T)^2 + \theta_y(T)^2 + \theta_z(T)^2}$, $\alpha = x, y ,z$, are the components of a unit vector $\vec{\Theta}(T)$. \\

In this representation, we can write down the evolution operator explicitly as
\begin{align}
\U(T) = \exp \Omega(T)&= \exp  \left[-i \abs{\vec{\theta}(T)} ~\vec{\Theta}(T) \cdot \vec{\sigma}\right] \non \\
&= \cos{\abs{\vec{\theta}(T)}} ~~ \hat{I} - i \sin{\abs{\vec{\theta}(T)}} ~~\vec{\Theta}(T) \cdot \vec{\sigma} \non \\
&= \begin{pmatrix}
c_T - i s_T ~\Theta_z & -s_T (~\Theta_y + i ~\Theta_x) \\
s_T (~\Theta_y - i ~\Theta_x) & c_T + i s_T ~\Theta_z.
\end{pmatrix}
\end{align}
where we simplified the notation by substituting $\sin\abs{\vec{\theta}(T)}$ with a real variable $s_T$ and replaced $\cos\abs{\vec{\theta}(T)}$ with a real variable $c_T$. 

In the case of linear driving, Salzman \cite{Salzman_87}
observed that the ratio $B_{n+2}/B_{n}$ converges to $-1/2\pi$. This implies that the ratio $\Omega_{n+2}/\Omega_n$ converges to $\left(a T/2 \pi\right)^2$ and the criterium $a T/2\pi < 1$ for the convergence of the Magnus expansion in this particular case.

\subsubsection{Discretization}
\label{sec:disc1q}

We intend to use the analytical results of the previous section to evaluate error associated with the discretized version of the Magnus expansion that is required for any numerical calculations. In the case of a Hamiltonian with a constant drift term, this means in the simplest instance replacing the integrals in Eq.~(\ref{Magnus_terms}) with sums, whereas the final time is given as $T = k \Delta t$ for $k = K$:

\begin{align}
\Omega_1 (T) &= \frac{1}{i \hbar} a ~K \Delta t ~\Sz + \frac{1}{i \hbar} ~\Delta t ~\Sx \sum_{k=1}^{K} ~b(k), \non \\
\Omega_2 (T) &= \frac{1}{i \hbar} \frac{a}{2} ~\Delta t^2~\Sy \left[\sum_{k_1=1}^{K} \sum_{k_2=1}^{k_1} ~b(k_2) -  \sum_{k_1=1}^{K} ~k_1~b(k_1)\right] \non \\
\Omega_3 (T) &= \frac{1}{i \hbar} \frac{a^2}{6} ~\Delta t^3 ~\Sx
\left[\sum_{k_1=1}^{K}\sum_{k_2=1}^{k_1}\sum_{k_3=1}^{k_2}
~b(k_3) - 
2 \sum_{k_1=1}^{K}\sum_{k_2=1}^{k_1}
~k_2~b(k_2) +
\frac{1}{2}\sum_{k_1=1}^{K}
~k_1 (k_1 + 1)~b(k_1)\right] \non \\
& - \frac{1}{i \hbar} \frac{a}{6}~\Delta t^3 \Sz
\left[2 \sum_{k_1=1}^{K}~b(k_1)\sum_{k_2=1}^{k_1}\sum_{k_3=1}^{k_2}
~b(k_3) -
\sum_{k_1=1}^{K}~b(k_1)\sum_{k_2=1}^{k_1}
k_2~b(k_2) -
\sum_{k_1=1}^{K}\sum_{k_2=1}^{k_1}~b(k_2)\sum_{k_3=1}^{k_2}
~b(k_3)\right]
\end{align}
and the discrete values $b(k)$ are taken at the midpoint, i.e. $b(k) = b((k - 1/2)\Delta t)$, to maintain accuracy of the discrete representation of the integrals which appear in Eq.~(\ref{Magnus_terms}).

\clearpage
\subsection{Two qubits}
\label{sec:2q}

Elements of the group $SU(4)$ of two-qubit operations can be written as a complex exponential function of the $\su(4)$ algebra generators. These naturally split into three sets: (i) $\T_{\alpha 0} = \sigma_{\alpha} \otimes \hat{I}/2$, (ii) $\T_{0 \alpha} = \hat{I} \otimes \sigma_{\alpha}/2$, and (iii) $\T_{\alpha \beta} = \sigma_{\beta} \otimes \sigma_{\alpha}/2$, where $\hat{I}$ is a $2 \times 2$ unit matrix and $\sigma_{\alpha}$, $\alpha = x,~y,~z$, are the Pauli matrices. The generators $\T_{\alpha 0}$ and $\T_{0 \alpha}$ correspond to single qubit operations on the first and second qubit respectively and hence generate the subgroup $SU(2) \otimes SU(2) \in SU(4)$.

Let us consider an envelope of the electromagnetic field, $\omega(t)$,
which we use to control a system of two qubits.
There, the only rigorous methods are based on \cite{d2000algorithms,khaneja2001time,981724}, which are not constructive.

We consider the system of two transmon qubits coupled to a bus resonator \cite{Magesan_19} which is characterized in the Rotating Wave Approximation (RWA) by the following Hamiltonian

\begin{align}
\Ham & = \Ham_0 + \Omega(t) \Ham(t)  = \hbar
\begin{pmatrix}
0 & 0 & \Omega(t)/2 & 0 \\
0 & 0 & J & \Omega(t)/2 \\
\Omega(t)/2 & J & \Delta & 0 \\
0 & \Omega(t)/2 & 0 & \Delta 
\end{pmatrix}\non \\
& = \frac{\hbar\Delta}{2} \hat{I} - \frac{\hbar\Delta}{2} ~\sigma_z \otimes \hat I + \frac{\hbar\Omega(t)}{2}~\sigma_x \otimes \hat{I} + \frac{\hbar J}{2} ~\left(\sigma_x \otimes \sigma_x + \sigma_y \otimes \sigma_y \right)
\label{2qubits}
\end{align}
where $\Delta = \bar{\omega}_1 - \bar{\omega}_2$ is a detuning between transition frequencies of the individual transmon qubits dressed by the resonator frequency, $J$ is the exchange coupling and $\Omega$ is the Rabi frequency  which drives the control qubit at the frequency of the target qubit to implement a two-qubit operation. The term $\Delta \hat{I}/2$ in Eq (\ref{2qubits}) generates a global phase in quantum dynamics and can be ignored, so the operator $\hat{A}(t) = \Ham/i\hbar$ can be written in terms of the $\su(4)$ algebra generators
\begin{align}
\hat{A} & = - \frac{\Delta}{2i} ~\sigma_z \otimes \hat I + \frac{\Omega(t)}{2i}~\sigma_x \otimes \hat{I} + \frac{J}{2i} ~\left(\sigma_x \otimes \sigma_x + \sigma_y \otimes \sigma_y \right) \non \\
&= \frac{1}{i} \left[- \Delta ~\T_{z0} + \Omega(t)~\T_{x0} + J~\left(\T_{xx} + \T_{yy}\right)\right].
\label{2qubits_2}
\end{align}
We emphasize that the Rabi frequency is time dependent $\Omega (t)$ and is used to control quantum dynamics of the system in order to steer the system to the desired control target like a specific unitary two-qubit operation. This significantly differs from the proposal \cite{Magesan_19} which considers the Rabi frequency $\Omega$ to be constant. 

We are now in a position to write down the individual terms of the Magnus expansion

\begin{align}
\Omega_1(T) &= \frac{1}{i}\int_0^T dt_1~\hat{H}(t_1), = \frac{1}{i} \left[-\Delta \T_{z0} + J\left(\T_{xx} + \T_{yy}\right)\right]~T + \frac{1}{i}~\T_{x0}~\int_0^T dt ~\Omega(t),\non \\
\Omega_2(T) &= \frac{1}{2i} \left(\Delta~\T_{y0}+ J~\T_{zy}\right)\left[\int_0^T  dt_1 ~t_1~\Omega(t_1) - \int_0^T  dt_1  \int_0^{t_1}  dt_2~\Omega(t_2)\right], \non \\
\Omega_3(T) &= \frac{1}{6i} 
\left\{
\left[\left(\Delta^2 + J^2\right)~\T_{x0} + J \Delta~\T_{zx}\right] \times \right. \non \\
&\left[ 2 \int_0^T  dt_1 \int_0^{t_1}  dt_2 ~t_2~\Omega(t_2) 
- \frac{1}{2} \int_0^T  dt_1 ~t_1^2~\Omega(t_1)
- \int_0^T  dt_1 \int_0^{t_1}  dt_2 \int_0^{t_2}  dt_3 ~\Omega(t_3)\right]\non \\
& + \left(\Delta~\T_{z0} - J~\T_{yy}\right)\times \left[\int_0^T  dt_1 ~\Omega(t_1) \int_0^{t_1}  dt_2  ~t_2~\Omega(t_2)\right. \non \\
&\left. \left. -2 \int_0^T  dt_1 ~\Omega(t_1) \int_0^{t_1}  dt_2  \int_0^{t_2}  dt_3 ~\Omega(t_3) + \int_0^T  dt_1  \int_0^{t_1}  dt_2 ~\Omega(t_2) \int_0^{t_2}  dt_3 ~\Omega(t_3)\right]\right\}, \non \\
\dots \non~.
\end{align}

\subsection{Controllability}

We observe that the Hamiltonian
\begin{align}
\hat{H} & = \left[- \Delta ~\T_{z0} + \Omega(t)~\T_{x0} + J~\left(\T_{xx} + \T_{yy}\right)\right] \non
\end{align}
produces, via the nested commutators of the Magnus expansion, the terms involving the following generators of the algebra $\su(4)$:
\begin{align}
&\left\{\T_{x0}, \T_{z0}, \T_{xx}, \T_{yy}\right\}^{k=1} \rightarrow 
\left\{\T_{y0}, \T_{zy} \right\}^{k=2} \rightarrow
\left\{\T_{x0}, \T_{z0}, \T_{zx}, \T_{yy}\right\}^{k=3} \rightarrow
\left\{\T_{y0}, \T_{xy}, \T_{yx}, \T_{zy}\right\}^{k=4} \non \\ &\rightarrow \cdots ~.
\non
\end{align} 

This set of generators is closed upon the repeated commutation with the Hamiltonian and thus this system is not operator controllable in the sense defined above.

\subsection{Towards operator controllability}

We notice that the set of generators obtained via the Magnus expansion contains all the generators for the algebra $\su(2)$ on the first qubit, and six two-qubit generators: \\
$\T_{xx}$, $\T_{xy}$, $\T_{yx}$, $\T_{yy}$, $\T_{zx}$, $\T_{zy}$.

The generators which are not obtained include those of $\su(2)$ on the second qubit:
$\T_{0x}$, $\T_{0y}$, and $\T_{0z}$,
and the two qubit generators: 
$\T_{xz}$, $\T_{yz}$, and $\T_{zz}$.

The Cartan decomposition of $\su(4)$ splits the algebra into two parts $\mathfrak{p} \oplus \mathfrak{k}$ where 
$\mathfrak{p}$ is the Cartan subalgebra generated by 
$\T_{xx}$, $\T_{yy}$, and $\T_{zz}$, and $\mathfrak{k}$
which consists of single qubit subalgebras.

They satisfy the commutation relations
\begin{align}
    \comm{\mathfrak{k}}{\mathfrak{k}} = \mathfrak{k},~~~~~
    \comm{\mathfrak{p}}{\mathfrak{k}} = \mathfrak{p},~~~~~
    \comm{\mathfrak{p}}{\mathfrak{p}} = \mathfrak{k}\non
\end{align}
where the last one suggests that $\su(2)$ on the second qubit can be obtained if our set of generators also contains $\T_{xz}$, $\T_{yz}$, and $\T_{zz}$. 

The operator controllability is achieved by augmenting the Hamiltonian by the generator $\T_{zz}$.
In superconducting qubits, $\T_{zz}$ is often present as a result of cross-talk arising from the direct residual dipolar coupling of the qubits \cite{PhysRevApplied054023}, albeit non-controllable. 

\newpage

\section{Non-commutative Polynomial Optimisation}
\label{sec:nc-pop}

For the convenience of the reader, let us now present an outline of some basic definitions and results in non-commutative polynomial optimisation. We stress that this section is neither original material \cite{Pironio2010}, nor an exhaustive survey \cite{burgdorf2016optimization}, but rather a summary of some of the results of Pironio et al. \cite{Pironio2010}.

The non-commutative version of the polynomial optimization problem in Hilbert space $H$
with inner product 
$\langle\cdot,\cdot\rangle$, 
and a normalised vector $\phi$, i.e., $\|\phi \|^2=1$, is: 

\begin{equation}\notag
\mathbf{P}:\qquad
\begin{array}{llcll}
\quad p^\star&=&\displaystyle\min_{(H,X,\phi)}& \multicolumn{1}{l}{\langle \phi, p(X) \phi\rangle}\\
&&\text{s.t.}& q_i(X)\succeq
0&\quad i=1,\ldots,m\,,\\
\end{array}
\end{equation}

where $q_i(x)\succeq 0$ denotes that the operator $q_i(X)$ is positive semi-definite. 


\paragraph{Monomials}
We introduce the $\dagger$-algebra that can be viewed as conjugate transpose. Also, for each $X_i$ in $X$, there is a corresponding $X_i^{\dagger}$. For simplicity, 
let $[X,X^{\dagger}]$ denote those $2n$ operators.

A monomial $\omega(X)$ is defined as the product of powers of variables from $[X,X^{\dagger}]$. The empty monomial is 1. Since $X$ is non-commutative, two monomials with the same variables but a different order of variables are regarded as different monomials. Also, for monomials, we have $\omega^{\dagger}=\omega_r^{\dagger} \omega_{r-1}^{\dagger} \dots \omega_1^{\dagger}$ when $\omega=\omega_1 \omega_2 \dots \omega_r$.

The degree of a monomial, denoted by $|\omega|$, refers to the sum of the exponents of all operators in the monomial $\omega$. Let $\mathcal{W}_d$ denote the collection of all monomials whose degrees $|\omega|$ are less or equal to $d$.

A polynomial $p(X)$ of degree $d$ is defined to be a linear combination of monomials $\omega\in \mathcal{W}_d$ with the coefficients $p_{\omega}$, which lie in the field of real or complex numbers. Hence, $\mathcal{W}_d$ can also be understood as the monomial basis for polynomials of degree $d$.

Looking back to the problem \textbf{P}, if we assume that the degree of $p(X)$ and $q_i(X)$ are $\deg (p)$ and $\deg (q_i)$ respectively, then those non-commutative polynomials can be written as

\begin{equation}
    p(X)=\sum_{|\omega|\leq \deg(p)} p_{\omega} \omega,\quad
    q_i(X) = \sum_{|\mu|\leq \deg(q_i)} q_{i,\mu} \mu, 
    \label{LCoP}
\end{equation}

where $i = 1,\ldots,m$.

\paragraph{Moments}
With a feasible solution $(H,X,\phi)$ of problem \textbf{P},
we can define the moments on a field $\mathbb{C}$ as:

\begin{equation}
    y_{\omega} = \langle \phi, \omega(X) \phi \rangle, \label{DEFoMOMENT}
\end{equation}

for all $\omega \in \mathcal{W}_{\infty}$ and  $y_1=\langle \phi,\phi \rangle=1$.
Given a degree $r$, moments whose degrees are less or equal to $r$ form a sequence of $y=(y_{\omega})_{|\omega| \leq 2r}$.

\paragraph{Moment Matrices}

With a finite set of moments $y$ of degree $r$, we can define a corresponding order-$r$ moment matrix $M_r(y)$:

\begin{equation}
    M_r(y)(\nu,\omega) = y_{\nu^{\dagger}\omega} = \langle \phi, \nu^{\dagger}(X)\omega(X) \phi \rangle,
\end{equation}

for any $ |\nu|,|\omega| \leq r$, and a localising matrix $M_{r-d_i}(q_i y)$ whose entries are given by: 

\begin{equation}
    M_{r-d_i}(q_iy)(\nu,\omega) = \sum_{|\mu| \leq \deg(q_i)} q_{i,\mu} y_{\nu^{\dagger}\mu\omega} = \sum_{|\mu| \leq \deg(q_i)} q_{i,\mu} \langle \phi, \nu^{\dagger}(X) \mu(X) \omega(X) \phi \rangle,
\end{equation}

for any $|\nu|,|\omega| \leq r-d_i$, where $d_i=\lceil \deg(q_i)/2\rceil$. The upper bounds of $|\nu|$ and $|\omega|$ are lower than the that of the moment matrix because $y_{\nu^{\dagger}\mu \omega}$ is only defined on $\nu^{\dagger}\mu\omega \in \mathcal{W}_{2r}$ while $\mu\in \mathcal{W}_{\deg(q_i)}$.

\paragraph{Expressing the Objective}

We can obtain the so-called order-$r$ SDP relaxation of the non-commutative polynomial optimisation problem \textbf{P} by choosing an order $r$ that satisfies the condition of $2r\geq \max\{\deg(p),\deg(q_i)\}$. 
The objective $p(X)$ can be rewritten as:

\begin{align}
\label{eq:sdpobj}
\langle\phi,p(X)\phi\rangle = \langle\phi,\sum_{|\omega|\leq \deg(p)} p_{\omega} \omega (X) \phi\rangle = \sum_{|\omega|\leq \deg(p)} p_{\omega} \langle\phi, \omega (X) \phi\rangle = \sum_{|\omega|\leq \deg(p)} p_{\omega} y_{\omega}.
\end{align}

Next, we need to consider the positive-semidefinite (PSD) constraints.

\paragraph{PSD Moment Matrix}
By introducing $y=(y_{\omega})_{|\omega| \leq 2r}$ into the relaxation, we can relax: \begin{equation}
\label{Y_v*w}
    y_{\nu^{\dagger}\omega}=\langle \phi,\nu^{\dagger}(X)\omega(X)\phi\rangle, 
\end{equation}
for all $\nu^{\dagger}\omega \in \mathcal{W}_{2r}$ to a positive-semidefinite constraint on the moment matrix $M_r(y)$. Indeed, for any vector $z\in\mathbb{C}^{|\mathcal{W}_r|}$,  Pironio et al. \cite{Pironio2010} have shown:
\begin{eqnarray*}
z^{\dagger} M_r(y) z &=& \sum_{|\nu|\leq r} \sum_{|\omega|\leq r} 
z_{\nu}^{\dagger} M_r(y)(\nu,\omega) z_{\omega} = \sum_{|\nu|\leq r} \sum_{|\omega|\leq r} z_{\nu}^{\dagger} y_{\nu^{\dagger}\omega} z_{\omega} \\
&=& \sum_{|\nu|\leq r} \sum_{|\omega|\leq r} 
z_{\nu}^{\dagger} \langle \phi,\nu^{\dagger}(X)\omega(X)\phi \rangle z_{\omega} 
= \langle \phi,\left(\sum_{|\omega|\leq r} z_{\omega} \omega(X) \right)^{\dagger} \sum_{|\omega|\leq r} z_{\omega} \omega(X) \phi \rangle \geq 0
\end{eqnarray*}
\label{sec:localising}

\paragraph{PSD Localising Matrix}

The constraints $q_i(X)\succcurlyeq 0,\forall i=1,\ldots,m$ can also be relaxed to constraints such that the localise matrix $M_{r-d_i}(q_i y)$ is positive semidefinite, for $d_i=\lceil \deg(q_i)/2\rceil$.
Indeed, for any vector $z\in\mathbb{C}^{|\mathcal{W}_r|}$, Pironio et al.  \cite{Pironio2010} have shown:

\begin{eqnarray*}
z^{\dagger} M_{r-d_i}(q_i y) z &=& \sum_{|\nu|\leq r-d_i} \sum_{|\omega|\leq r-d_i} 
z_{\nu}^{\dagger} M_{r-q_i}(q_i y)(\nu,\omega) z_{\omega} 
= \sum_{|\nu|\leq r-d_i} \sum_{|\omega|\leq r-d_i} 
z_{\nu}^{\dagger} \left(
\sum_{|\mu| \leq \deg(q_i)} q_{i,\mu} y_{\nu^{\dagger}\mu\omega}  
\right) z_{\omega} \\
&=& \sum_{|\nu|\leq r-d_i} \sum_{|\omega|\leq r-d_i} 
z_{\nu}^{\dagger} \left(
\sum_{|\mu| \leq \deg(q_i)} q_{i,\mu} \langle \phi, \nu^{\dagger}(X) \mu(X) \omega(X) \phi \rangle
\right) z_{\omega} \\
&=& \langle \phi,\sum_{|\nu|\leq r-d_i} z_{\nu}^{\dagger}\nu^{\dagger}(X) 
\sum_{|\mu| \leq \deg(q_i)} q_{i,\mu} \mu(X)
\sum_{|\omega|\leq r-d_i} z_{\omega} \omega(X) \phi \rangle \\
&=& \langle \phi, \left(\sum_{|\omega|\leq r} z_{\omega} \omega(X)\right)^{\dagger}
q_i(X)
\sum_{|\omega|\leq r} z_{\omega} \omega(X) \phi \rangle \geq 0.
\end{eqnarray*}


\paragraph{Convergent SDP Relaxation}

By putting together the objective \eqref{eq:sdpobj} and the positive-semidefinite constraints on the moment and localising matrices, we can write down the order-$r$ SDP relaxation of the non-commutative polynomial optimisation problem as:

\begin{align}
	  \min_{y=(y_{\omega})_{|\omega|\leq 2r}} & 
	  \sum_{|\omega|\leq d} p_{\omega} y_{\omega} \\ 
	  M_r(X) & \succcurlyeq 0 \\
	  M_{r-d_i}(q_i X) & \succcurlyeq 0 \quad i=1,\ldots,m \\
	                       y_1 & = 1 \\
	  \langle\phi,\phi\rangle & = 1 \label{NCPO-R}.
\end{align}

If Archimedean Assumption \ref{Archimedean} is satisfied, Pironio et al.  \cite{Pironio2010} have shown that $\lim_{r \to \infty} p^r=p^*$.
Furthermore, if the so-called rank-loop condition is satisfied, i.e., when the rank of moment matrix $M_r(y)$ equals the rank of $M_{r-d}(y)$, where $d$ is the highest degree of constraints and $d\geq 1$, 
the global optimality is reached \cite{Pironio2010} at the order-$(r-d)$ relaxation. 

\newpage

\section{Computability and Convergence Rates}
\label{sec:computability}

In some sense, our paper could be seen a proof of approximability of quantum optimal control to any precision in a model, where real numbers are atomic units of information and basic floating-point arithmetic operations on real numbers can be performed in unit time.
In particular, from Theorem \ref{T2} and Corollary \ref{T1}, it follows that arbitrarily good approximation of quantum optimal control is computable in the  model of Blum, Shub, and Smale \cite{blum1989}:

\begin{proposition}
For any number of terms $k$,
any order $r$, 
any step $\Delta t$ of the discretisation of time,
and any finite precision $\epsilon$,
the semidefinite programming relaxation of either Theorem \ref{T2} or Corollary \ref{T1} 
is computable up to the precision $\epsilon$ in time polynomial in its dimension 
in the Blum-Shub-Smale model.
\end{proposition}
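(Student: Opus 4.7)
The plan is to bound the dimension of the order-$r$ SDP relaxation explicitly in terms of $k$, $r$, $\Delta t$, and the fixed problem data, and then invoke a standard interior-point analysis in the BSS model. First, I would count variables. With $K = \lceil T/\Delta t\rceil$ discrete time instants, finitely many control channels, and a finite basis of the algebra generated by the $\Ham_j$, the tuple $X = (X_1,\dots,X_n)$ of noncommuting indeterminates entering the relaxation has size $n$ polynomial in $K$ and in the Hilbert-space dimension $N$. The degrees of the polynomial objective and constraints arising from the first $k$ Magnus terms are bounded by a function of $k$, so any legitimate relaxation order satisfies $2r\ge\max\{\deg(p),\deg(q_i)\}$. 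The monomial basis $\mathcal{W}_r$ in $2n$ letters has cardinality $\sum_{i=0}^{r}(2n)^i = O((2n)^r)$; consequently the moment matrix $M_r(y)$ is a Hermitian positive semidefinite block of order $|\mathcal{W}_r|$, the number of scalar decision variables $y_\omega$ is bounded by $|\mathcal{W}_{2r}|$, and each localising matrix $M_{r-d_i}(q_i y)$ has order at most $|\mathcal{W}_{r-d_i}|$ and depends linearly on the $y$-variables. The total problem size $D=D(k,r,\Delta t)$ is therefore finite and explicit.

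Next, I would cast the relaxation in standard primal-dual SDP form: minimise a linear functional of $y$ over the intersection of the PSD cones $M_r(y)\succeq 0$ and $M_{r-d_i}(q_iy)\succeq 0$, together with the affine normalisation $y_1=1$. The matrix coefficients are polynomial in $\Delta t$ and in the structure constants of the $\Ham_j$, all of which the BSS machine treats as atomic real constants.

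Then I would apply the Nesterov--Nemirovski interior-point algorithm to the natural self-concordant logarithmic barrier of the PSD cones. In the Blum--Shub--Smale model of \cite{blum1989}, this solves any SDP of size $D$ to additive accuracy $\epsilon$ in $O(\sqrt{\vartheta}\log(1/\epsilon))$ Newton-like iterations, where the barrier parameter satisfies $\vartheta=O(D)$, and each iteration costs $\mathrm{poly}(D)$ unit-cost real arithmetic operations. The total running time is thus polynomial in $D$ and in $\log(1/\epsilon)$, hence polynomial in $D$ for any fixed $\epsilon$.

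The main obstacle will be to certify Slater regularity of the relaxation, which is needed for the polynomial iteration count of interior-point methods. This follows by a short perturbation argument from Assumption \ref{Archimedean}: the Archimedean property bounds the positivity domain $\mathbf{S}_Q$ and yields a strictly feasible moment sequence obtained by evaluating the truncated moment map at an interior tuple of operators, after which the moment and localising matrices can be made strictly positive definite by an arbitrarily small perturbation. A secondary remark is that the BSS framework is designed precisely to sidestep the bit-complexity complications that would otherwise arise in the Turing model when the problem data involve irrational reals, so no separate rounding analysis is required for the statement as phrased.
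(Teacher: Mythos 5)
Your proposal is correct and follows essentially the same route as the paper: the paper's proof simply invokes the iteration analysis of primal-dual interior-point methods applied to the order-$r$ SDP relaxation in the Blum--Shub--Smale model, noting that the number of localising constraints is polynomial in the dimension. Your additional attention to the explicit size count of the moment and localising matrices and to Slater regularity is a more careful elaboration of the same argument rather than a different approach (and the Slater/strict-feasibility point, which the paper does not address, is indeed the one place where extra care is genuinely warranted).
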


\begin{proof}
By iteration analysis of primal-dual interior-point methods, cf. \cite{Tuncel2000}, 
applied to the SDP \eqref{NCPO-R}.
For an explicit treatment, using a substantially more complicated arguments, see \cite{muramatsu2018oracle}.
We only have to notice that the number of localising constraints is polynomial in the dimension. 
\end{proof}

To prove similar results in the Turing model is a major challenge. Indeed, quantum optimal with a control signal taking on values from a discrete set is undecidable in the Turing model  \cite{bondar2019uncomputability}
without additional assumptions.



Beyond the computability, further study of the run-time and the rate of convergence would be beneficial. 
To reduce run-time, it seems important
 to exploit sparsity \cite{mevissen2008solving,klep2019sparse,wang2019tssos}. 
The rate of convergence depends on the rate of convergence of Magnus expansion, 
the choice of the discretisation of time \cite{singh2018high,kopylov2019magnus}, 
the rate of convergence of the hierarchy of SDP relaxations to the optimum of the non-commutative polynomial optimisation problem,
and possibly the rate of convergence of the SDP solver. 
Considering the growth of the dimensions of the SDP relaxations, 
the rate of convergence is 
both of practical and theoretical interest.
The rates of convergence of Magnus expansion in $m$, depending on the discretisation, are well known \cite[Chapter 9]{singh2018high}.
For another hierarchy of SDP relaxations, under some additional assumptions about boundedness, the order-$r$ semidefinite programming relaxation of the comutative polynomial optimisation problem has error bounded by $O(1/r^2)$ in the worst case. See  \cite{deKlerk2020,slot2019improved,laurent2020nearoptimal}. 
In the non-commutative case of \eqref{polyncprog2},
the present-best result is that of Fang and Fawzi \cite{fang2019sum}, who again under some additional assumptions,
show that the rate of convergence is no worse than $O(1/r^2)$
for the order $r$.
Continuing this line of work, one could estimate the overall convergence rate,
which would be a major result.

}

\bibliography{ref}

\end{document}